\theoremstyle{plain} %default
\newtheorem{lemma}{Lemma}
\newtheorem{prop}{Proposition}
\theoremstyle{definition}
\def\bal#1\eal{\begin{align}#1\end{align}}
\newcommand{\bxi} {\boldsymbol{\xi}}
\newcommand{\bW}{{\bf W}}
\newcommand{\bH}{{\bf H}}
\newcommand{\bJ}{{\bf J}}
\newcommand{\bR}{{\bf R}}
\newcommand{\bP}{{\bf P}}
\newcommand{\bY}{{\bf Y}}
\newcommand{\bK}{{\bf K}}
\newcommand{\bI}{{\bf I}}
\newcommand{\bU}{{\bf U}}
\newcommand{\bV}{{\bf V}}
\newcommand{\bv}{{\bf v}}
\newcommand{\bN}{{\bf N}}
\newcommand{\bt}{{\bf t}}
\newcommand{\bA}{{\bf A}}
\newcommand{\bB}{{\bf B}}
\newcommand{\bC}{{\bf C}}
\newcommand{\bD}{{\bf D}}
\newcommand{\bQ}{{\bf Q}}
\newcommand{\bL}{{\bf L}}
\newcommand{\bZ}{{\bf Z}}
\newcommand{\bq}{{\bf q}}
\newcommand{\bT}{{\bf T}}
\newcommand{\bn}{{\bf n}}
\newcommand{\bX}{{\bf X}}
\newcommand{\bSigma} {\boldsymbol{\Sigma}}
\newcommand{\gl}{\lambda}
\newcommand{\by}{{\bf y}}
\newcommand{\bx}{{\bf x}}
\newcommand{\ba}{{\bf a}}
\newcommand{\br}{{\bf r}}
\newcommand{\bz}{{\bf z}}
\newcommand{\bs}{{\bf s}}
\newcommand{\bo}{{\bf 0}}
\newcommand{\bRequire}{{\bf Require}}
\newcommand{\bp} {\begin{proof}}
\newcommand{\ep} {\end{proof}}
\newcommand{{\Rb}} {\right)}
\newcommand{{\Rf}} {\right\}}
\newcommand{{\diag}} {\mathrm{diag}}
\begin{document}

\title{Enhancing Secure MIMO Transmission via Intelligent Reflecting Surface}

\author{Limeng Dong, Hui-Ming Wang \emph{Senior Member, IEEE}

\thanks{The authors are with the  School of Information and Communications Engineering, and also with the Ministry of Education Key Laboratory for Intelligent Networks and Network Security, Xi'an Jiaotong University, Xi'an 710049, China (e-mail: dlm$\_$nwpu@hotmail.com; xjbswhm@gmail.com)}

}

\maketitle

\begin{abstract}

 In this paper,  we consider an intelligent reflecting surface (IRS) assisted Guassian multiple-input multiple-output (MIMO) wiretap channel (WTC), and  focus on  enhancing its secrecy rate. Due to MIMO setting, all the existing solutions for enhancing the secrecy rate over multiple-input single-output WTC completely fall to this work. Furthermore, all the existing studies are simply based on an ideal assumption that full channel state information (CSI) of eavesdropper (Ev) is available. Therefore, we propose numerical solutions to enhance the secrecy rate of this channel under both full  and no Ev's CSI cases.
 For the full CSI case, we propose a barrier method and one-by-one (OBO) optimization combined alternating optimization (AO) algorithm to jointly optimize the transmit covariance $\bR$ at transmitter (Tx) and phase shift coefficient $\bQ$ at IRS. 
For the case of no Ev's CSI, we develop an artificial noise (AN) aided joint transmission scheme  to enhance the secrecy rate. In this scheme, a bisection search (BS) and OBO optimization combined AO algorithm is proposed to jointly optimize $\bR$ and $\bQ$. Such scheme is also applied  to enhance the secrecy rate under a special scenario in which the direct  link between Tx and receiver/Ev is blocked due to obstacles. In particular, we propose a BS and minorization-maximization (MM) combined AO algorithm with slightly faster convergence to optimize $\bR$ and $\bQ$ for this scenario.   Simulation results  have validated the monotonic convergence of the proposed algorithms, and it is shown that the proposed  algorithms  for the IRS-assisted design achieve significantly larger secrecy rate than the other benchmark schemes  under full CSI.  When Ev's CSI is unknown,  the secrecy performance of this channel also can be  enhanced by the proposed AN aided  scheme, and there is a trade-off between increasing the quality of service at Rx and enhancing the secrecy rate.

\end{abstract}

\begin{IEEEkeywords}
Intelligent reflecting surface, MIMO, secrecy rate, artificial noise, CSI.
\end{IEEEkeywords}

\section{Introduction}

Physical-layer security (PLS)  has emerged as a very valuable technology to deal with eavesdropping attacks in wireless systems.   In this approach, the
secrecy of communication is ensured at the physical layer
by exploiting the properties of wireless communication channels so that the transmitted information to users can be
completely “hidden” from eavesdropping and  cannot be recovered by
malicious eavesdroppers \cite{Bloch-11}.  This approach greatly  makes up for the defects of high complexity and high cost of hardware resources in traditional encryption method. Secrecy rate (capacity)  is  the key issue of guaranteeing the user's secret communication in PLS, and how to maximize the  secrecy rate  in multi-antenna wiretap channel (WTC) has drawn wide attentions in the past decade. The earliest research starts from the secrecy capacity of Gaussian multiple-input single-output (MISO) WTC \cite{Khisti-1}, and  several signal processing strategies such as artificial noise (AN) was also established for maximizing the secrecy rate \cite{Li-13}. Later on, The secrecy capacity of Gaussian multi-input multi-output (MIMO) WTC was deeply analyzed in \cite{Khisti-2}, and numerical solutions were proposed to maximize the secrecy rate \cite{Loyka-15}. Besides, the study of cognitive radio MIMO WTC was also established, and several analytical solutions were  being put forward to enhance user's secrecy performance \cite{Dong-18}\cite{Loyka-18}. In addition, large number of research results were established   to enhance the secrecy performance of 5G based multi-antenna wireless networks  via PLS \cite{Wu-18c}.

Recently, intelligent reflecting surface (IRS), also known as reconfigurable intelligent surface, has been proposed and it has drawn wide attention for its applications in wireless communications. IRS is a  software-controlled metasurface consisting of large numbers of passive reflecting elements. These reflecting elements could induce certain phase shift by a software based controller for the incident electromagnetic signal waves with very low power consumption \cite{Zhao-19} so that the propagation channel can be adjusted intelligently. Compared with the traditional reflecting surface, relaying and backscatter communications, the great benefits of IRS are concluded as three key aspects. Firstly, the phase shift in traditional reflecting surface  is fixed and cannot be changed, while IRS could continuously change the phase shift  by its small scale controller \cite{Hu-18}. Secondly, IRS is with low complexity and can be easily deployed on buildings, ceilings or indoor spaces, and it  is not affected by the receiver noise since it is not equipped with any signal processing equipments such as analog-to-digital, digital-to-analog converter and modulator or demodulator. Thirdly, IRS can be as a special ``relay" since it just reflects the  signal passively without any transmit power consumptions while the traditional relay requires a certain amount of  power for  signal transmission \cite{Ntontin-19}. These significant advantages make IRS as a green energy-efficient technique for beyond 5G and even 6G, and can be applied into various communication scenarios such as multi-cell, massive device-to-device, wireless information and power transfer, and secure communications  \cite{Wu-19}.

Several contributions have been established for boosting user's transmission rate by deploying IRS in wireless systems \cite{Wu-19b}-\cite{Zhang-19}. In \cite{Wu-19b}, an IRS-assisted MISO channel was considered and an algorithm was proposed to jointly optimize the active beamforming at transmitter and  passive phase shift coefficients at IRS. Simulation results showed that  a significantly larger  transmission rate can be achieved with the aid of IRS than that without IRS. Later, the model was extended to IRS-assisted MISO downlink multi-user channel in \cite{Huang-18}. The  energy efficiency  of IRS-assisted MISO downlink channel was also studied in \cite{Huang-19}\cite{Huang-18b}, and it was shown that with the aid of IRS, a huge increase of 300\% energy efficiency can be achieved compared with the regular multi-antenna amplify-and-forward relaying \cite{Huang-19}.  The algorithms for maximizing the transmission rate of IRS-assisted MIMO channel is established in \cite{Zhang-19}. All these results indicate that IRS greatly   boosts the transmission rate compared with no IRS case.

\subsection{Related work}

Inspired by these research results, IRS was recently combined with PLS to deal with eavesdropping attack issues. By adjusting the phase shift coefficients, the propagation channel between transmitter and receiver/eavesdropper can be adjusted so that the reflected signal by IRS not only can be added constructively with the non-reflected signal at the user, but also added destructively with the non-reflected signal at eavesdropper, thus significantly boosting the secrecy rate. Several research results about secure IRS-assisted MISO WTC  were established  \cite{Shen-19}-\cite{Xu-19}. In  \cite{Shen-19}-\cite{Cui-19}, the algorithms for secrecy rate maximization were proposed, and it was shown that IRS significantly helps improve the secrecy performance compared with no IRS  case. In \cite{Song-20}, a low complexity deep learning based solution was proposed, and it was shown that a comparable  secrecy performance can be achieved compared with the solution illustrated in \cite{Shen-19}. In \cite{Zheng-20}, it was shown that with IRS, the transmitter could use significantly less power to meet  the target secrecy rate  at receiver. A special case where the direct link between transmitter and receiver/eavesdropper was blocked was also considered in \cite{Yu-19}\cite{Feng-19}. The secrecy rate optimization of  multi-user MISO downlink WTC has been studied and several algorithms were proposed to maximize the secrecy rate \cite{Chen-19}\cite{Xu-19}. In addition, an algorithm for enhancing IRS-assisted MIMO WTC was studied in \cite{Dong-20}.
All these results again validate that IRS significantly enhance the user's secrecy rate compared with the existing solutions for no IRS case.

However, all the aforementioned works in the current literatures \cite{Shen-19}-\cite{Xu-19} are only restricted to MISO  setting, i.e., only one antenna at the receiver is considered. When  MIMO setting is considered, there are two significant differences about the
optimization problems compared with conventional MISO case.
Firstly, in MIMO systems, beamforming is not always optimal solution and hence the variable at transmitter changes from a beamforming vector to a covariance matrix   in the corresponding secrecy rate
maximization problem. Secondly, in the MIMO case, the objective
function in the problem is a complicated log of determinant expression compared with a simpler log of scalar formular in the MISO case. Therefore, the difficulty of maximizing the secrecy rate is significantly increased and all these existing numerical solutions for the MISO case fail to the MIMO case. Although \cite{Dong-20} proposes a numerical solution to enhance the IRS-assisted MIMO WTC, it only applies for a special case in which there is no direct communication link between transmitter and receiver/eavesdropper. When the direct links exist in the general case, how to numerically enhance the secrecy rate for this model is still an open problem. Furthermore, all these current works \cite{Shen-19}-\cite{Dong-20} are simply based on an ideal assumption that full CSI is available at transmitter, which is not practical since the eavesdropper is usually  a passive user and it does not actively exchange its CSI with the transmitter. And currently, there is no approaches about how to effectively guarantee secure communication in the IRS-assisted design if eavesdropper's CSI is unknown.

\subsection{Contributions}

Hence, motivated by these aforementioned significant benefits brought by IRS as well as the current research defects, in this paper,  we proceed to combine IRS with PLS  to 
enhance the secrecy performance of MIMO  channels. Specifically, we consider a general IRS-assisted Gaussian MIMO WTC in which one multi-antenna transmitter, receiver, eavesdropper as well as an IRS are involved, and  aim at developing numerical solutions to maximize its secrecy rate.  The main novelty and contribution of this paper is summarized as follows.

1). Firstly, we assume that full CSI is available at the transmitter, and to maximize the secrecy rate, an alternating optimization (AO) algorithm  is proposed to jointly optimize the transmit covariance $\bR$ at transmitter as well as phase shift coefficient $\bQ$ at IRS in two independent sub-problems.  To optimize $\bR$ given $\bQ$, the non-convex sub-problem is firstly equivalently transformed to a convex-concave problem whose optimal solution is a saddle point. Then,  a barrier method in combination with Newton method and backtracking line search method is proposed to globally optimize $\bR$. To optimize $\bQ$ given $\bR$ in the non-convex sub-problem, an one-by-one (OBO) optimization method is proposed in which each one of the $n$ phase shift coefficients are optimized in order by fixing the other $n-1$ coefficients as constant. As the convergence is reached, the results returned by the AO is a limit point solution of the original problem.  Simulation results show that our algorithm for the proposed IRS-assisted design  greatly enhance the secrecy rate compared with the existing benchmark schemes with and without IRS.

2). Secondly, we assume that the eavesdropper's CSI is completely unknown at transmitter. To maximize the secrecy rate of this channel given a fixed total power at transmitter, we propose  an AN aided joint transmission scheme, in which a minimum transmit power is firstly optimized subject to a quality-of-service (QoS) constraint by jointly optimizing $\bR$ and $\bQ$,  and then AN is applied to jam the eavesdropper by using the residual power at transmitter. When solving the power minimization problem, a bisection search (BS) and OBO combined AO algorithm is to jointly optimize $\bR$ given $\bQ$. As the convergence is reached, the results returned by the AO is also a limit point solution of the original problem. Simulation results show that our proposed AN aided joint transmission scheme also greatly enhance the secrecy rate under  QoS constraint, and it is shown that there is a trade-off between increasing the QoS  and enhancing secrecy rate.

3). For the case of no eavesdropper's CSI, a special scenario is considered in which the direct communication link between the transmitter and receiver/eavesdropper is blocked by obstacles, and  AN aided  scheme is still  applied to enhance the secrecy rate in this case. In particular, in addition to BS and OBO combined   AO algorithm, we propose a BS and minorization-maximization (MM) combined algorithm to  solve the power minimization problem. In this MM algorithm, all the $n$ phase shift coefficients are simultaneously optimized iteratively given fixed $\bR$ in the sub-problem. The key difficulty is how to obtain a proper lower bound (i.e., surrogate function) of the objective function in the sub-problem so that MM can be applied to optimize $\bQ$. Therefore, we propose three successive approximations for the objective function to find a proper surrogate lower bound of the objective function, which  is significantly different from the MISO case \cite{Shen-19}\cite{Yu-19} where only one approximation is used to obtain the bound due to the simple structure of the objective function.  Simulation results show that the proposed MM and BS combined AO algorithm has less performance on enhancing the secrecy rate but with slightly faster speed of convergence  compared with the  OBO and BS combined AO algorithm.

The rest of the paper is organized as follows: Section II describes the channel model and formulate the optimization problem. In section III, the AO algorithm is proposed to jointly maximize $\bR$ and $\bQ$ under full CSI case. In section IV, the AN aided joint transmission scheme is proposed to maximize the secrecy rate under no eavesdropper's CSI case. Simulation results have been carried out to evaluate the performance and convergence of proposed algorithm in section V. Finally, section VI concludes the paper.

\emph{Notations}: bold lower-case letters ($\ba$) and capitals ($\bA$) denote vectors and matrices respectively; $\bA^{\rm T}$, $\bA^{*}$ and $\bA^{\rm H}$ denote transpose, conjugate and  Hermitian conjugate of $\bA$, respectively; $\bA \geq \bo$ means  positive semi-definite; ${\rm E}\left \{ \cdot  \right \}$ is statistical expectation, $\gl_i(\bA)$ denotes eigenvalues of $\bA$, which are in decreasing order unless indicated otherwise, i.e. $\gl_1\ge\gl_2\ge\gl_3...$; ${\rm rank}(\bA)$ denotes the rank of $\bA$; $|\bA|$ and ${\rm tr}(\bA)$ are determinant and trace of $\bA$; $\bI$ is an identity matrix of appropriate size; $|\ba|$ denotes the norm of the vector $\ba$; $\mathbb{C}^{M \times N}$ and $\mathbb{R}^{M \times N}$ denote the space of $M\times N$ matrix with complex-valued elements and real-valued elements, respectively; $\otimes$ denotes Kronecker product and $\odot$ denotes Hadamard product; ${\rm vec}(\bA)$ is the vector obtained by stacking all columns of matrix $\bA$ on top of each other; ${\rm arg}(a)$ denotes the phase of the complex value $a$;  ${\rm diag}(\ba)$ is to transform the vector $\ba$ as a diagonal matrix with diagonal elements in $\ba$; ${\rm Re}\{\bA\}$ denotes the real elements in $\bA$.

\section{Channel Model And Problem Formulation}

Let us consider an IRS-assisted MIMO WTC model shown as Fig.1, in which a  transmitter Alice, receiver Bob, eavesdropper Eve and an IRS are included. The  number of antennas deployed at Alice, Bob and Eve are $m$, $d$, $e$ respectively, and the number of reflecting elements on the IRS is $n$. The task for IRS in this model is to adjust the phase shift
coefficient of the reflecting elements by the controller, and reflect the  information signals from
Alice passively to Bob and Eve (without generating any extra noise) so as to constructively add
with the non-reflected signal from Alice-Bob link and destructively add with the non-reflected
signal from Alice-Eve link.   Based on this setting, the received signals at Bob and Eve  are expressed as
\bal
\notag
&\by_{B} = \bH_{AB}\bx+\bH_{IB}\bQ\bH_{AI}\bx+\bxi_{B},\\
\notag
  &\by_{E}=\bH_{AE}\bx+\bH_{IE}\bQ\bH_{AI}\bx+\bxi_{E}
\eal
respectively where $\bx\in\mathbb{C}^{m\times1}$ is the transmitted signal, $\bH_{AB}\in\mathbb{C}^{d\times m}$, $\bH_{AE}\in\mathbb{C}^{e\times m}$, $\bH_{AI}\in\mathbb{C}^{n\times m}$, $\bH_{IB}\in\mathbb{C}^{d\times n}$ and $\bH_{IE}\in\mathbb{C}^{e\times n}$  are the channel matrices representing the direct link of   Alice-Bob, Alice-Eve,  Alice-IRS, IRS-Bob and IRS-Eve respectively, $\bxi_{B}\in\mathbb{C}^{d\times 1}$ and $\bxi_{E}\in\mathbb{C}^{e\times 1}$ represent complex noise at  Bob and Eve respectively with i.i.d entires distributed as $\mathcal{CN}(0,1)$,  
$\bQ={\rm diag}([q_1,q_2,...,q_n]^{\rm T})$ is the diagonal phase shift matrix for IRS, $q_i=e^{j\theta_i}$ is the phase shift coefficient at reflecting element $i$ ($i=1,2,...,n$). In addition, the controller shown in Fig.1 is used to coordinate Alice and IRS
for channel acquisition and data transmission tasks \cite{Zhang-19}.

 Based on the this signal model $\by_{B}$ and $\by_{E}$,  the achievable transmission rate $C_B$ at Bob and $C_E$ at Eve can be expressed as
\bal
\notag
&C_B\\
\notag
=&\log_2\left |\bI+(\bH_{AB}+\bH_{IB}\bQ\bH_{AI})\bR(\bH_{AB}+\bH_{IB}\bQ\bH_{AI})^{\rm H}\right |,\\
\notag
&C_E\\
\notag
=&\log_2\left |\bI+(\bH_{AE}+\bH_{IE}\bQ\bH_{AI})\bR(\bH_{AE}+\bH_{IE}\bQ\bH_{AI})^{\rm H}\right |
\eal
respectively
 where $\bR={\rm E}\{\bx\bx^{\rm H}\}$ is the transmit covariance matrix. Therefore, based on the key concept of information-theoretic PLS, to guarantee secure communication for this channel, the achievable secrecy rate $C_B-C_E>0$ should holds, and larger
secrecy rate indicates better secrecy performance \cite{Bloch-11}. In this paper, we will focus on enhancing the secrecy rate of this channel by jointly optimizing $\bR$ and $\bQ$ based on two conditions of CSI: full CSI\footnote{Note that for full CSI available at Alice, this can be achieved by modern adaptive
system design, where channels are estimated at Bob and Eve,
and send back to Alice. Since Eve is just other user in
the system and it also share its CSI with Alice but is untrusted
by Bob.  For how to estimate the channels, we apply the existing solutions (see e.g. \cite{Wang-19}-\cite{Mirza-19}) to obtain the direct link $\bH_{AB}$ and $\bH_{AE}$  as well as the reflecting link $\bH_{AI}$, $\bH_{IB}$ and $\bH_{IE}$.} and completely no Eve's CSI at Alice. 

\begin{figure}[t]%[htbp]
	\centerline{\includegraphics[width=3.0in]{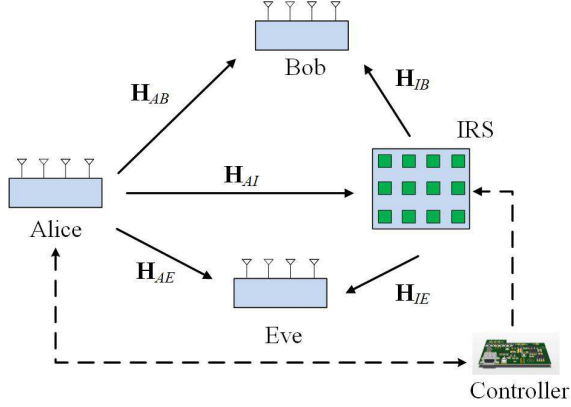}}
	\caption{A block diagram of IRS-assisted Gaussian MIMO WTC}
\end{figure}

\section{AO Algorithm for Enhancing Secrecy Rate  Under Full CSI}

In this section, we assume that full CSI is available at Alice, and focus on enhancing the secrecy rate of IRS-assisted MIMO WTC.  Based on the aforementioned system setting, the  secrecy rate optimization problem of this channel model is expressed as 
\bal
\label{origial_problem}
{\rm P1}: \underset{\bR\in S_{\bR}\bQ\in S_{\bQ}}{\max}\ C_s(\bR,\bQ)=C_B-C_E
\eal
where 
\bal
\notag
&S_{\bR}\triangleq\{\bR:\bR\geq \bo, {\rm tr}(\bR)\leq P\},\\
\notag
  &S_{\bQ}\triangleq\{\bQ: |q_i|=1, \forall i\}
\eal
denote the feasible set of the transmit covariance $\bR$ and  phase shift matrix $\bQ$, respectively, ${\rm tr}(\bR)\leq P$ is the total power constraint (TPC) at Alice, $P$ denotes  total transmit power budget,  the unit modulus constraint (UMC) $|q_i|=1$ ensures that each reflecting element in IRS does not change the amplitude of the signal. We note that this is a complicated non-convex problem due to non-convex objective function as well as non-convex constraint. Since MIMO setting is considered in this work, all the existing solutions for the MISO case \cite{Shen-19}-\cite{Xu-19}  completely fail to solve ${\rm P1}$ in the full MIMO setting. The reason is that   the determinant term in the objective function $C_B-C_E$ cannot be equivalently simplified to a scalar formular as in the MISO case. Although \cite{Dong-20} studied the secrecy rate enhancement of IRS-assisted MIMO WTC, its proposed solution still cannot be directly applied to our considered system model in which the direct link  channels $\bH_{AB}$ and $\bH_{AE}$ exist  in addition to the reflecting link channels $\bH_{AI}$, $\bH_{IB}$ and $\bH_{IE}$. 

Hence, in this section, we develop a  numerical solution to solve the non-convex ${\rm P1}$, which is based on AO algorithm to jointly optimize  $\bR$ and $\bQ$. The main reason for choosing this  algorithm is due to two aspects. Firstly,  different from the secrecy capacity of MIMO WTC optimization problem \cite{Khisti-2}-\cite{Loyka-18} in which only one variable $\bR$ is considered, there are two variables $\bR$ and $\bQ$ need to be optimized, also note that the objective function $C_B-C_E$ is a non-convex complicated form and it is difficult to directly obtain the optimal solutions. Hence, the function  of AO algorithm is to split the non-convex problem ${\rm P1}$ into two sub-problem with simpler structure by fixing each variable as a constant so that  $\bR$,  $\bQ$ can be optimized separately in each sub-problem. Secondly, although $\bR$ and $\bQ$ are bounded by the TPC and UMC respectively, these two constraints are independent between each other. Therefore, using AO algorithm, a monotonic convergence of the objective value  can be achieved so that a limit point solution of  ${\rm P1}$ can be obtained. In our AO algorithm,  two new solutions were developed  to optimize $\bR$ and $\bQ$  for each sub-problem in the following subsections. When optimizing $\bR$ given fixed $\bQ$, the non-convex sub-problem is firstly equivalently transformed to a convex-concave problem, and a barrier method in combination with Newton method and backtracking line search method is proposed to  globally optimize $\bR$. When optimizing $\bQ$ given fixed $\bR$, we propose an OBO optimization method  to optimize a sub-optimal solution of $\bQ$. 

\subsection{Barrier Method for Optimizing Transmit Covariance at  Alice}

In this subsection, we fix phase shift $\bQ$ as a constant and optimize the covariance $\bR$. When  $\bQ$ is fixed, the sub-problem of optimizing $\bR$ is expressed as
\bal
{\rm P2}: \underset{\bR\in S_{\bR}}{\max} \ C(\bR)=\log_2\frac{|\bI+\bH_1\bR\bH_1^{\rm H}|}{|\bI+\bH_2\bR\bH_2^{\rm H}|}
\eal
where $\bH_1=\bH_{AB}+\bH_{IB}\bQ\bH_{AI}, \ \ \bH_2=\bH_{AE}+\bH_{IE}\bQ\bH_{AI}$.  
It can be known that this is a standard secrecy capacity optimization problem of Gaussian MIMO WTC. This is  a difficult non-convex problem, however, it can be equivalently transformed to a convex-concave formular. By applying the key theorem in \cite{Khisti-2},  ${\rm P2}$ can be equivalently expressed as 
\bal
 {\rm P3}:\ \max_{\bR\in S_{\bR}} \min_{\bK\in S_{\bK}} f(\bR,\bK)=\log_2\frac{|\bI+\bK^{-1}\bH\bR\bH^{\rm H}|}{|\bI+\bH_2\bR\bH_2^{\rm H}|}
\eal
where  $\bH=[\bH_1^{\rm H}, \bH_2^{\rm H}]^{\rm H}$, $\bK=E\{\bxi\bxi^{\rm H}\}\in\mathbb{C}^{2(d+e)\times 2(d+e)}$, $\bxi=[\bxi_{B1}^{\rm H}, \bxi_{B2}^{\rm H}, \bxi_{E1}^{\rm H}, \bxi_{E2}^{\rm H}]^{\rm H}$, $S_{\bK}$ is the feasible set of $\bK$ defined as
\bal
S_{\bK}\triangleq \left \{\bK:\bK=
\begin{bmatrix}
\bI & \bN^{\rm H} \\
\bN & \bI
\end{bmatrix},\ \bK \ge \bo \right \}
\eal
where $\bN=E\{\bxi_E\bxi_B^{\rm H}\}\in\mathbb{C}^{2e\times 2d}$, $\bxi_B=[\bxi_{B1}^{\rm H}, \bxi_{B2}^{\rm H}]^{\rm H}$, $\bxi_E=[\bxi_{E1}^{\rm H}, \bxi_{E2}^{\rm H}]^{\rm H}$.
The theorem in \cite{Khisti-2} indicates that ${\rm P3}$ is a convex-concave optimization problem with saddle point solution, i.e., the objective is concave in $\bR$ for any given $\bK$ and convex in $\bK$ for any given $\bR$ so that Karush-Kuhn-Tucker (KKT) is the sufficient and necessary conditions for optimality. Also the saddle point solution $\bR$ for the max-min problem ${\rm P3}$ is always the optimal solution for the original max problem ${\rm P2}$. Hence, the work left is how to numerically obtain the saddle point solution for ${\rm P3}$. In this subsection, we apply the barrier method illustrated in \cite{Loyka-15} to solve ${\rm P3}$, which is in combination with Newton method and backtracking line search method. Note that  in \cite{Loyka-15}, this method was developed only based on  real-valued channel matrix case. In this paper, we improve this algorithm  by re-deriving the gradients and Hessians of the barrier function so that it can be used under general complex-valued channel matrix case.

Specifically, by introducing a barrier parameter $t>0$, the constraints can be absorbed by the objective function in P3 so that the barrier function can be expressed as
\bal
\notag
 f_t(\bR,\bK)=&f(\bR,\bK)+t^{-1}\log_2(P-tr(\bR))\\
\notag
 &+t^{-1}\log_2|\bR|-t^{-1}\log_2|\bK|.
\eal
Hence, a new optimization problem can be formulated as 
\bal
{\rm P4}: \ \max_{\bR\in S'_{\bR}} \min_{\bK\in S'_{\bK}} \ f_t(\bR,\bK)
\eal
where $S'_{\bR}=\{\bR:\bR> \bo, {\rm tr}(\bR)< P\}, \ S'_{\bK}=\{\bK:\bK> \bo\}$. Since $S'_{\bR}\in S_{\bR}$ and $S'_{\bK}\in S_{\bK}$, $\rm P4$ is still a convex-concave optimization problem so that KKT conditions is still sufficient and necessary for optimality. Thus, the work reduces to find the saddle point satisfying the  KKT conditions $\nabla_\bR f_t(\bR,\bK)=\bo,  \nabla_\bK f_t(\bR,\bK)=\bo$.
 Since the variable $\bR$ and $\bK$ are Hermitian matrices, it is difficult to obtain the gradient (KKT conditions) and Hessians of the objective function. Thus, a vectorization for the variables is needed. Note that in \cite{Loyka-15},  $\bR$ and $\bK$ are symmetric real matrix and its vectorization as well as the obtained gradients and Hessians  cannot be directly applied to our complex-valued case. In this paper, we make new vectorization for the  complex $\bR$ and $\bK$. Specifically, let 
$\br=[\br_d^{\rm T}, \br_l^{\rm T}, \br_l^{\rm H}]^{\rm T}, \bn=[{\rm vec}(\bN)^{\rm T}, {\rm vec}(\bN)^{\rm H})]^{\rm T}$, where $\br_d$ denotes vectorizing all the diagonal real elements of $\bR$ and  $\br_l$ denotes vectorizing all the lower triangular complex elements of $\bR$, then $\br$ and $\bn$ can be further expressed as the  linear transformation $\br=\bD_{\br}^{\rm T}{\rm vec}(\bR), \bn=\bD_{\bn}^{\rm T}{\rm vec}(\bK-\bI)$ where $\bD_{\br}\in\mathbb{R}^{m^2 \times m^2}$ and $\bD_{\bn}\in\mathbb{R}^{(d+e)^2 \times 2de}$ are unique full column rank matrices (with the elements either zero and one) satisfying $\bD_{\br}^{\rm T}\bD_{\br}=\bI$ and $\bD_{\bn}^{\rm T}\bD_{\bn}=\bI$. For the details about how to construct $\bD_{\br}$ and $\bD_{\bn}$, please refer to \cite{Hjorungnes-11}. Let $\bz=[\br^{\rm H}, \bn^{\rm H}]^{\rm H}$, since $\bz$ can represent all the key information of $\bR$ and $\bK$ completely, the work now  further reduces to find the optimal $\bz$ satisfying the new KKT conditions $r(\bz)=[(\nabla_{\br^{*}} f_t(\bR,\bK))^H \nabla_{\bn^{*}} f_t(\bR,\bK)^H]^H=\bo$.
In Newton method, the optimality condition $r(\bz)=\bo$ is iteratively solved using 1st-order approximation of $r(\bz)$, which corresponds to the 2nd order approximation of the objective function
\bal
\label{residual}
r(\bz_k+\Delta\bz)=r(\bz_k)+\bT\Delta\bz+o(\Delta\bz)=\bo
\eal
where $\bz_k$ and $\Delta\bz$ are the current variables and their updates at iteration $k$ respectively, and where $\bT$ is the derivative of $r(\bz)$, i.e., the Hessian matrix of $f_t(\bR,\bK)$ in $\br$ and $\bn$:
\bal
\bT=\begin{bmatrix}
 \nabla_{\br^{*}\br^{\rm T}}^2 f_t(\bR,\bK) & \nabla_{\br^{*}\bn^{\rm T}}^2 f_t(\bR,\bK)\\
[\nabla_{\br^{*}\bn^{\rm T}}^2 f_t(\bR,\bK)]^{\rm H} & \nabla_{\bn^{*}\bn^{\rm T}}^2 f_t(\bR,\bK)
\end{bmatrix}.
\eal
Closed-form expressions for gradients $\nabla_{\br^{*}} f_t(\bR,\bK)$, $\nabla_{\bn^{*}} f_t(\bR,\bK)$ and Hessians $\bT$ are given in the Appendix. By ignoring $o(\Delta\bz)$, \eqref{residual} can be further expressed as linear equation
\bal
\label{update}
r(\bz_k)+\bT\Delta\bz=\bo
\eal
so that the update $\Delta\bz$ can be solved numerically from this  equation using the existing  solver (such as ``linsolve" function in Matlab). After that, $\bz$ can be updated as  $\bz_{k+1}=\bz_k+s\Delta\bz$ where $s>0$ denotes the step size which can be found via backtracking line search.

The proposed algorithm for solving P2 is summarized as Algorithm 1. In this algorithm, $\alpha$ is the percentage of the liner decrease for the residual norm $|r(\bz_k)|$ in backtracking line search, $\beta$ controls the reduction in step size at each iteration of backtracking line search, $\eta$ controls the increase of $t$ at each iteration of barrier method, $\epsilon_1$ is the required computing accuracy, and $t\in[t_0, t_{max}]$ where $t_0$, $t_{max}$ are the initial and maximum value of $t$ respectively, $(\bR_{0}, \bK_{0})$ is the feasible starting point satisfying $\bR_0\in S'_{\bR}$ and $\bK_0\in S'_{\bK}$. Once the target accuracy of Newton method is reached, the  computed $(\bR(t),\bK(t))$ is set as the new starting point for the new problem P4 with updated $t$. As $t$ has reached the maximum $t_{max}$, the algorithm stops and output the final solution $\bR$. Note that $t_{max}$ should not be set too high, or the Hessian matrix $\bT$ will get close to singular so that the update $\Delta\bz$ cannot be computed in practice. Using the same steps of proof illustrated in \cite{Loyka-15}, it can be verified that the residual norm $|r(\bz_k)|$ is strictly decreasing in the Newton steps, and also Hessian matrix $\bT$ is non-singular for each $t>0$ so that the solution $\Delta\bz$ is unique during each iteration. Finally, by applying the key properties of barrier method \cite{Boyd-04}, one obtains that as $t\rightarrow \infty$, $f(\bR(t), \bK(t))\rightarrow C_{opt}$, from which Algorithm 1 is guaranteed to global convergence.

\begin{algorithm}[h]
	\caption{(\it solving ${\rm P2}$ given fixed $\bQ$)}
	\begin{algorithmic}
		\State \bRequire\  $(\bR_{0}, \bK_{0})\rightarrow \bz_0$, $0<\alpha <0.5$, $0<\beta<1$, $t_{0}>0$, $t_{max}>t_0$, $\mu>1$, $\epsilon_1 >0$.
        \State 1. Set $t=t_0$.
		\Repeat\ \ (start barrier method)
        \State 2. Set $k=0$.
		\Repeat\ \ (start Newton method)
		\State 3. Compute $r(\bz_{k})$  for current $k$, and compute update $\Delta \bz$ via \eqref{update}, and set $s=1$.
		\Repeat\ \ (start backtracking line search method)
		\State 4. $s:=\beta s$, update $\bz_{k+1}=\bz_{k}+s\Delta \bz$.
		\Until {$|r(\bz_{k+1})|\leqslant (1-\alpha s)|r(\bz_{k})|$ and  $\bR_{k+1} \in S'_{\bR},\bK_{k+1} \in S'_{\bK}$}
		\State 5. $k:=k+1$.
		\Until{$|r(\bz_{k})|\leqslant \epsilon_1$}
        \State 6. Set $\bz_{0}:=\bz_{k}$ as a new starting point, and update $t:=\mu t$.
		\Until {$t>  t_{max}$}
	\end{algorithmic}
\end{algorithm}

In Algorithm 1,  when $t$ is fixed, the main computational cost comes from the residual norm $r(\bz_k)$, the update $\Delta\bz$ and the loop of backtracking line search in each iteration of Newton method. The complexity of computing $r(\bz_k)$ and $\Delta\bz$ are  about $u_1=O(m^4+(d+e)^3+m(d+e)^2+em^2+2de(d+e)^2 )$ and $u_2=O((m^2+de)^3)$ respectively. The complexity of finding the step size $s$ in each iteration of backtracking line search is about $u_3=O(m^2+de)$. If $l_b$ is the total iterations required for backtracking line search to converge, one obtains that the total computation complexity for the  current iteration of Newton method  is $u_1+u_2+l_bu_3$. Furthermore, according to the property of standard barrier method \cite{Boyd-04}, the total number of Newton steps for each value of $t$ scales as $l_n=O(log_2\epsilon_1^{-1}\sqrt{\frac{m(m+1)}{2}+de})$. Thus, the total complexity of Algorithm 1 for each fixed $t$ is $O(l_n(u_1+u_2+l_bu_3))$.

\subsection{Algorithm of Optimizing Phase Shift Matrix at the IRS}

 With the algorithm of optimizing $\bR$ for fixed $\bQ$, the next step is to compute $\bQ$  in this sub-section. The sub-problem of optimizing $\bQ$ given fixed $\bR$ is expressed as ${\rm{P5}}$.
\bal
\label{origial_problem}
{\rm{P5}}: \underset{\bQ}{\max}\ C_s(\bR,\bQ),\ s.t.\  |q_i|=1, \forall i.
\eal
Since the objective function in ${\rm{P5}}$ is a complicated log determinant function, the existing solutions such as semi-definite relaxation and MM in \cite{Shen-19}\cite{Cui-19} fail to solve this problem. Hence, inspired by \cite{Zhang-19}, we propose an OBO optimization method  to optimize $\bQ$,  in which each one of the $n$ phase shift coefficients are optimized in order by fixing the other $n-1$ coefficients as constant. Note that the system model illustrated in \cite{Zhang-19} is only an IRS-assisted MIMO channel, and its solution  can not be directly applied to our WTC case. Hence, our OBO optimization method is based on the method in \cite{Zhang-19} but with proper extensions so that a  sub-optimal $\bQ$ for ${\rm{P5}}$ can be obtained. Specifically, consider the $i$-th phase shift coefficient is unknown and all the rest $n-1$ coefficients are given, and let the eigenvalue decomposition of $\bR$ is $\bR=\bU_{\bR}\bSigma_{\bR}\bU_{\bR}^{\rm H}$  where $\bU_{\bR}$ is the unitary matrix, the columns of which are the eigenvectors of $\bR$, $\bSigma_{\bR}$ is the diagonal matrix, in which the diagonal entries are eigenvalues of $\bR$. The following proposition concludes the closed-form optimal solutions of $q_i$.
\begin{prop}
Given fixed $q_1, q_2,..., q_{i-1}, q_{i+1},..,q_n$, then ${\rm P5}$ can be simplified to the following problem
${\rm P6}$
\bal
{\rm P6}: \underset{q_i}{\max}\ C'_B(q_i)-C'_E(q_i), \ \ s.t.\  |q_i|=1, \forall i
\eal
where $C'_B(q_i)=\log_2|\bI+q_i\bA_i^{-1}\bB_i+q_i^*\bA_i^{-1}\bB_i^{\rm H}|$, $C'_E(q_i)=\log_2|\bI+q_i\bC_i^{-1}\bD_i+q_i^*\bC_i^{-1}\bD_i^{\rm H}|$.
When ${\rm tr}(\bA_i^{-1}\bB_i)={\rm tr}(\bC_i^{-1}\bD_i)=0$, then  $q_i=e^{j\theta_i}$ where  $\theta_i$ can be any value between $0$ and $2\pi$. When ${\rm tr}(\bA_i^{-1}\bB_i)\neq0, {\rm tr}(\bC_i^{-1}\bD_i)=0$, then the  optimal solution of $q_i$ is expressed as
$q_i=e^{-jarg(\bar{\lambda}_i)}$, where $\bar{\lambda}_i$ is the only non-zero eigenvalue of $\bA_i^{-1}\bB_i$.
When ${\rm tr}(\bA_i^{-1}\bB_i)=0, {\rm tr}(\bC_i^{-1}\bD_i)\neq0$, then  the  optimal solution of $q_i$ is expressed as
$q_i=e^{j(\pi-arg(\tilde{\lambda}_i))}$, where $\tilde{\lambda}_i$ is the only non-zero eigenvalue of $\bC_i^{-1}\bD_i$.
When ${\rm tr}(\bA_i^{-1}\bB_i)\neq0, {\rm tr}(\bC_i^{-1}\bD_i)\neq0$, then the  optimal solution of $q_i$ is expressed as
$q_i=e^{-jarg(\bar{\lambda}_i-u\tilde{\lambda}_i)}$, 
where $u>0$ is found from BS algorithm.
\end{prop}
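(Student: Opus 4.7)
The plan is to proceed in three stages: reduce ${\rm P5}$ to the scalar-parametrized form ${\rm P6}$ that isolates the dependence on $q_i$; collapse the resulting log-determinants to a scalar expression using the rank-one structure of the $q_i$-cross terms; and then dispose of the four cases by inspecting that scalar objective.

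For the reduction I would single out the $i$-th column $\ba$ of $\bH_{IB}$ and the $i$-th row $\bb^{\rm T}$ of $\bH_{AI}$, writing $\bH_{IB}\bQ\bH_{AI}=q_i\,\ba\bb^{\rm T}+\bG_i$, where $\bG_i$ collects the contributions of the $n-1$ fixed phases. Substituting into $\bH_1\bR\bH_1^{\rm H}$ and using $|q_i|^2=1$, the quadratic-in-$q_i$ term $q_i q_i^{*}\ba\bb^{\rm T}\bR\bb^{*}\ba^{\rm H}$ becomes independent of $q_i$, so $\bI+\bH_1\bR\bH_1^{\rm H}=\bA_i+q_i\bB_i+q_i^{*}\bB_i^{\rm H}$ with $\bA_i$ Hermitian positive definite and $\bB_i$ rank one (because one of its factors is the rank-one outer product $\ba\bb^{\rm T}$). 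Pulling $\bA_i$ out of the determinant via $|\bA+\bX|=|\bA|\,|\bI+\bA^{-1}\bX|$ gives $C_B=\log_2|\bA_i|+\log_2|\bI+q_i\bA_i^{-1}\bB_i+q_i^{*}\bA_i^{-1}\bB_i^{\rm H}|$. The same manipulation applied to $C_E$ produces $\bC_i,\bD_i$, and since $\log_2|\bA_i|,\log_2|\bC_i|$ do not depend on $q_i$ they drop from the maximization, giving ${\rm P6}$.

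To collapse the determinant I would factor $\bA_i^{-1}\bB_i=\bu\bv^{\rm H}$ (possible since it is rank one) and combine the two cross terms as $q_i\bu\bv^{\rm H}+q_i^{*}\bv\bu^{\rm H}=[\bu,\bv]\,[q_i^{*}\bv,q_i\bu]^{\rm H}$. Applying Sylvester's identity $|\bI+\bX\bY|=|\bI+\bY\bX|$ reduces the log-det to a $2\times 2$ determinant whose expansion, together with $|q_i|=1$ and the rank-one identity $\bv^{\rm H}\bu={\rm tr}(\bu\bv^{\rm H})={\rm tr}(\bA_i^{-1}\bB_i)=\bar\lambda_i$, yields $|\bI+q_i\bA_i^{-1}\bB_i+q_i^{*}\bA_i^{-1}\bB_i^{\rm H}|=1+2{\rm Re}(q_i\bar\lambda_i)+|\bar\lambda_i|^2-\|\bu\|^2\|\bv\|^2$; Eve's counterpart is analogous with $\tilde\lambda_i$ replacing $\bar\lambda_i$ and different residual constants. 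With this scalar reduction the four cases follow by inspection. When $\bar\lambda_i=\tilde\lambda_i=0$ both determinants are constant in $q_i$, so any phase is optimal. When exactly one trace vanishes, $C'_B-C'_E$ becomes $\log_2$ of an affine function of ${\rm Re}(q_i\bar\lambda_i)$ (or of ${\rm Re}(q_i\tilde\lambda_i)$ with a sign flip, since $C'_E$ enters with a minus): maximizing ${\rm Re}(q_iz)$ over the unit circle gives $q_i=e^{-j\arg(z)}$ and minimizing it gives $q_i=e^{j(\pi-\arg(z))}$, matching the two middle cases. For the remaining case the objective is $\log_2$ of a ratio, so I would apply Dinkelbach-style parametrization: for $u>0$ set $F(u)=\max_{|q_i|=1}\{N(q_i)-uD(q_i)\}$; since $N-uD$ is affine in $q_i$ through ${\rm Re}(q_i(\bar\lambda_i-u\tilde\lambda_i))$, its unit-modulus maximizer is $q_i(u)=e^{-j\arg(\bar\lambda_i-u\tilde\lambda_i)}$. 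Because $F$ is a supremum of affine functions in $u$ with slope $-D(q_i(u))<0$, it is convex and strictly decreasing, so $F(u)=0$ has a unique root $u^{*}>0$ equal to the optimal ratio, and bisection locates it to arbitrary accuracy.

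The main obstacle will be the rank-one bookkeeping in the Sylvester step: one must combine the two cross terms into a single rank-two outer product so that the inner $2\times 2$ determinant collapses to a clean scalar depending only on $\bar\lambda_i$, while the factorization-dependent residual $\|\bu\|^2\|\bv\|^2$ is absorbed into a $q_i$-independent constant. The remaining care is to verify positivity of $N(q_i),D(q_i)$ over the unit circle and strict monotonicity of $F(u)$, so that the bisection in the fourth case is guaranteed to be well posed.
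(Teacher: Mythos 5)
Your proposal is correct in substance and follows the paper's overall strategy: isolate the $q_i$-dependence (using $|q_i|^2=1$ to absorb the quadratic term into $\bA_i$ and pulling $\log_2|\bA_i|$, $\log_2|\bC_i|$ out as constants), split into the four trace cases, and resolve the last case by Dinkelbach parametrization plus bisection. Your monotonicity argument for $F(u)$ (supremum of affine functions with negative slope) is equivalent to the paper's Lemma~1, which proves the same fact by directly comparing the optima at $u_1<u_2$. Where you genuinely diverge is in collapsing the log-determinant to a scalar. The paper leans on the machinery of the reference [Zhang-19]: it distinguishes whether $\bA_i^{-1}\bB_i$ is diagonalizable (trace nonzero) or not (trace zero), and invokes separate derivations for the two situations to arrive at $C'_B(q_i)=\log_2\bigl(1+\bar{\lambda}_i^2(1-\bar{v}'_{i1}\bar{v}_{i1})+2{\rm Re}\{q_i\bar{\lambda}_i\}\bigr)$. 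Your Sylvester-identity route reduces everything to a single $2\times 2$ determinant, treats all four cases uniformly, and makes the $q_i$-independence in the trace-zero cases immediate; that is arguably cleaner and more self-contained than the paper's case-by-case eigendecomposition.

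One step as written would fail and needs repair: from $\bA_i^{-1}\bB_i=\bu\bv^{\rm H}$ it does not follow that $\bA_i^{-1}\bB_i^{\rm H}=\bv\bu^{\rm H}$, since $\bv\bu^{\rm H}=(\bA_i^{-1}\bB_i)^{\rm H}=\bB_i^{\rm H}\bA_i^{-1}\neq\bA_i^{-1}\bB_i^{\rm H}$ in general ($\bA_i^{-1}$ does not commute with $\bB_i^{\rm H}$). The fix is to factor $\bB_i=\ba\bc^{\rm H}$ itself (it is rank one by construction) and write $q_i\bA_i^{-1}\bB_i+q_i^{*}\bA_i^{-1}\bB_i^{\rm H}=[\bA_i^{-1}\ba,\ \bA_i^{-1}\bc]\,[q_i^{*}\bc,\ q_i\ba]^{\rm H}$; Sylvester then yields $1+2{\rm Re}(q_i\bar{\lambda}_i)+|\bar{\lambda}_i|^2-(\ba^{\rm H}\bA_i^{-1}\ba)(\bc^{\rm H}\bA_i^{-1}\bc)$ with $\bar{\lambda}_i={\rm tr}(\bA_i^{-1}\bB_i)$. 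This matters beyond aesthetics: the uncorrected version gets the $q_i$-independent constant wrong, and in Case~4 that constant enters $\bar{c}_i$, $\tilde{c}_i$ and hence the Dinkelbach root $u$, which would shift the reported optimal phase $e^{-j{\rm arg}(\bar{\lambda}_i-u\tilde{\lambda}_i)}$. With that correction, the four cases go through exactly as you describe.
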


\begin{proof}
Detailed expressions of  $\bA_i, \bB_i, \bC_i, \bD_i$ as well as the proof are provided in Appendix.
\end{proof}

Based on Proposition 1, the OBO optimization algorithm for optimizing $\bQ$ given $\bR$ is summarized as Algorithm 2. In this algorithm, $q_i^{ini}$ is the initialized feasible starting point for each phase shift coefficient. Once all the coefficients are optimized in order, the output $\bQ$ returned by OBO algorithm is a sub-optimal solution of ${\rm P5}$. In this algorithm, the main computational  complexity comes from computing the eigenvalue of $\bA_i^{-1}\bB_i$ and $\bC_i^{-1}\bD_i$. Hence, if there are $n$ reflecting elements, the total complexity of Algorithm 2 is about $O(2n(d^3+e^3))$.

\begin{algorithm}[h]
	\caption{(\it OBO optimization algorithm for solving P5 given fixed $\bR$)}
	\begin{algorithmic}
		\State \bRequire\  $q_i^{ini}$, $i=1,2,...n$.
         \State 1.Set $i=1$.
		\Repeat\ \ 
         \State 2. Compute ${\rm tr}(\bA_i^{-1}\bB_i)$ and ${\rm tr}(\bC_i^{-1}\bD_i)$.
         \State 3.  Obtain the optimal solution of $q_i$  according to Proposition 3.
           \State 4. Set $i=i+1$.
		\Until{$i=n$} 
       \State 5. Output  $\bQ={\rm diag}\{[q_1,q_2,..., q_n]^{\rm T}\}$ as the sub-optimal solution of P5.  
	\end{algorithmic}
\end{algorithm}

\subsection{Summary of the AO Algorithm}
Finally, with Algorithm 1 and Algorithm 2 at hand, the AO algorithm for maximizing the secrecy rate of IRS-assisted MIMO WTC  is summarized as Algorithm 3. 
Since $\bR$ and $\bQ$ are optimized alternatively, the value of objective function $C_s(\bR,\bQ)$ in P1 is non-decreasing with iterations, i.e,
$C_s(\bR_{1},\bQ_{1})\leq C_s(\bR_{2},\bQ_{2})\leq...\leq C_s(\bR_{k},\bQ_{k})$,
where $\bR_k$ and $\bQ_k$ is the optimized solution of P2 and P5 returned by Algorithm 1 and Algorithm 2 respectively at iteration $k$. Furthermore,  since $\bR$ and $\bQ$ are both bounded by the feasible set $S_{\bR}$ and $S_{\bQ}$ respectively, by applying the Cauchy's theorem \cite{Chen-19}, one obtains that a solution $\bR_{opt}$ and $\bQ_{opt}$ always exist such that
\bal
\notag
0&=\underset{k\rightarrow \infty}\lim\{C_s(\bR_{k},\bQ_{k})-C_s(\bR_{opt},\bQ_{opt})\}\\
\notag
&\leq \underset{k\rightarrow \infty}\lim\{C_s(\bR_{k+1},\bQ_{k+1})-C_s(\bR_{opt},\bQ_{opt})\}=0,
\eal
which indicates that Algorithm 3 is guaranteed to converge to a limit point solution of P1.

\begin{algorithm}[h]
	\caption{(\it AO algorithm of solving {\rm P1})}
	\begin{algorithmic}
		\State \bRequire\  $\epsilon_2 >0$.
        \State 1. Set initial point $\bR_{ini}$, $\bQ_{ini}$.
        \State 2. Compute  $C_1=C_B-C_E$ given $\bR_{ini}$ and $\bQ_{ini}$.
		\Repeat\ \ 
     \State 3. Given fixed $\bR_{ini}$,  solve the  sub-optimal solution $\bQ_{opt}$ of P5 using Algorithm 2.
         \State 4. Solve the global optimal solution $\bR_{opt}$ of P2 given  $\bQ_{opt}$ using Algorithm 1.
          \State 5. Compute the current object value $C_2=C_B-C_E$ under $\bR_{opt}$ and $\bQ_{opt}$.
		  \State 6.  If $|C_2-C_1|/|C_1|>\epsilon_2$, set $\bQ_{opt}=\bQ_{ini}$ and $C_1=C_2$, go back to step 3.
		\Until{$|C_2-C_1|/|C_1|\leq\epsilon_2$} 
       \State 7. Output current $\bR_{opt}$, $\bQ_{opt}$ as the final limit point solution of P1.  
	\end{algorithmic}
\end{algorithm}

\begin{comment}
To guarantee the  convergence of AO algorithm, a good starting point is necessary. Hence, we  fix $\bQ_{ini}=\bI$ and  establish an initialization method for constructing $\bR_{ini}$. Specifically,  let the phase shift for each reflecting element to be zero so that the corresponding effective channels become $\bH_1=\bH_{AB}+\bH_{IB}\bH_{AI}$ and $\bH_2=\bH_{AE}+\bH_{IE}\bH_{AI}$. According to Proposition 3 in \cite{Loyka-16b}, the optimal signaling direction for achieving secrecy capacity is  to transit over the positive directions of $\bH_1^{\rm H}\bH_1-\bH_2^{\rm H}\bH_2$. Hence, let $\bU_+$ denotes the semi-unitary matrix in which the columns are the eigenvectors corresponding to the positive eigenvalues of $\bH_1^{\rm H}\bH_1-\bH_2^{\rm H}\bH_2$, then the range of the global optimal $\bR$ must lie in the range of $\bU_+$. Therefore, the starting point  can be initialized as $\bR_{ini}=\bU_+\bU_+^{\rm H}+a\bI$
where $a>0$ is a small constant satisfying $\bR_{ini}>\bo$ and ${\rm tr}(\bR_{ini})<P$.  Based on our extensive simulations, given randomly generated channels, a  monotonic convergence for Algorithm 3 can be guaranteed under this setting of starting point.
\end{comment}

\section{Enhancing the Secrecy Rate Under no Eve's CSI}

In the previous section, the secrecy rate of IRS-assisted MIMO WTC is optimized based on an ideal assumption that  full CSI is available at Alice. In practice,  Eve is usually a hidden passive malicious user, it does not actively exchange its CSI with Alice, i.e., the channel matrix $\bH_{AE}$, $\bH_{IE}$ are completely unknown. Therefore, it is unlikely  to achieve secure communication by formulating an optimization
problem as ${\rm P1}$ under this case.  To the best of our knowledge, there is no current research results about how to enhance the secrecy rate in IRS-assisted system  without Eve's CSI.

 Inspired by the previous work in \cite{Wang-12}, in this section, we propose an AN aided joint transmission scheme to enahance the secrecy rate if 
$\bH_{AE}$, $\bH_{IE}$ are completely unknown at Alice. The main procedure of this scheme is concluded as two steps. In the first step, we minimize a transmit power $P_{min}$ at Alice subject to an achievable rate QoS constraint at Bob. To solve the power minimization problem, a BS and OBO combined AO algorithm is applied to   jointly optimizing $\bR$ and $\bQ$. Once the minimum power is obtained, in the second step, AN is applied   to jam Eve by using the residual power $P-P_{min}$ at Alice so as to decrease  Eve's channel capacity $C_E$. 
In addition, we also apply this AN aided scheme to a special case where the direct link between Alice and Bob/Eve is blocked (i.e., $\bH_{AB}=\bo$ and  $\bH_{AE}=\bo$).  In particular, apart from   OBO optimization method for optimizing $\bQ$ given $\bR$ for the power minimization problem, we propose an MM algorithm to obtain the sub-optimal  $\bQ$ in which all $n$ phase shift coefficients are simultaneously optimized. And we give detailed steps about how to find the proper lower bound (i.e., surrogate function) of the complicated objective function so that MM can be applied to iteratively optimize $\bQ$.

\subsection{Power Minimization and AN aided jamming}

Firstly, after obtaining the CSI of $\bH_{AI}$ and $\bH_{IB}$ at Alice, a power minimization problem subject to QoS constraint at Bob is formulated as the following ${\rm P9}$.
\bal
\notag
{\rm P9}: &\underset{\bR, \bQ}{\min}\ {\rm tr}(\bR),\\
\notag
 &s.t.\ \log_2|\bI+\bH_1\bR\bH_1^{\rm H}|\geq\gamma, |q_i|=1, \forall i, \bR\geq\bo
\eal
where $\log_2|\bI+\bH_1\bR\bH_1^{\rm H}|\geq\gamma$ is the QoS constraint, $\gamma$ is the lowest communication rate requirement at Bob. It can be known that this is also a non-convex problem  due to the non-convex QoS constraint and UMC, however, it still can be optimized via AO algorithm. Note that although \cite{Wu-19b}  also addresses to solve a power minimization problem subject to QoS constraints, its solutions only applies for the MISO case and fail to our MIMO case. Therefore, we propose an BS and OBO combined AO algorithm to address the non-convex ${\rm P9}$.

Considering $\bQ$ is fixed, the corresponding sub-problem of optimizing $\bR$ is expressed as ${\rm P10}$.
\bal
\notag
{\rm P10}: \underset{\bR}{\min}\ {\rm tr}(\bR),\  s.t.\ \log_2|\bI+\bH_1\bR\bH_1^{\rm H}|\geq\gamma, \bR\geq\bo.
\eal
To optimize $\bR$ in ${\rm P10}$, we apply the following key proposition.
\begin{prop}
Assume the optimal solution and the corresponding optimal value of {\rm P10} is $\bR_{opt}$ and $P_{opt}$ respectively,  and consider the following problem 
\bal
\notag
{\rm P10'}: &\underset{\bR}{\max}\ C'(\bR)=\log_2|\bI+\bH_1\bR\bH_1^{\rm H}|,\\
\notag
  &s.t.\  \bR\geq\bo, {\rm tr}(\bR)\leq \it{P}_{opt}.
\eal
Then, the optimal solution  and the corresponding optimal value of this problem are also $\bR_{opt}$  and $\gamma$.
\end{prop}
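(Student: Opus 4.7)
The plan is to prove the proposition via a standard contradiction argument exploiting the monotonicity and continuity of $C'(\bR)$ with respect to the scaling of $\bR$, which is the hallmark equivalence between power minimization under a rate constraint and rate maximization under a power constraint.

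First I would establish that the QoS constraint must be active at the optimum of ${\rm P10}$, i.e., $C'(\bR_{opt}) = \gamma$. Suppose instead that $C'(\bR_{opt}) > \gamma$. Consider the scaled matrix $c\bR_{opt}$ for $c \in (0,1]$ and define $\phi(c) = C'(c\bR_{opt}) = \log_2|\bI + c\bH_1\bR_{opt}\bH_1^{\rm H}|$. Since $\phi$ is continuous, monotonically increasing in $c$, with $\phi(0)=0$ and $\phi(1) > \gamma$, there exists $c^* \in (0,1)$ with $\phi(c^*) = \gamma$. Then $c^*\bR_{opt}$ is feasible for ${\rm P10}$ with ${\rm tr}(c^*\bR_{opt}) = c^*P_{opt} < P_{opt}$, contradicting the minimality of $P_{opt}$. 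Hence $C'(\bR_{opt}) = \gamma$.

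Next, I would verify that $\bR_{opt}$ is feasible for ${\rm P10'}$: we have $\bR_{opt} \geq \bo$ by assumption and ${\rm tr}(\bR_{opt}) = P_{opt}$ by definition, so both constraints in ${\rm P10'}$ are satisfied with the second one at equality. The achieved objective value is exactly $C'(\bR_{opt}) = \gamma$ from the previous step.

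Finally, I would show no feasible $\bR'$ of ${\rm P10'}$ can achieve a strictly larger objective than $\gamma$, thereby confirming $\bR_{opt}$ is optimal and the optimal value is $\gamma$. Suppose for contradiction there exists $\bR' \geq \bo$ with ${\rm tr}(\bR') \leq P_{opt}$ and $C'(\bR') > \gamma$. Then $\bR'$ is feasible for ${\rm P10}$ (it satisfies the QoS constraint strictly), and applying the same scaling/continuity argument to $\bR'$, there exists $c^{**} \in (0,1)$ with $C'(c^{**}\bR') = \gamma$ and ${\rm tr}(c^{**}\bR') < {\rm tr}(\bR') \leq P_{opt}$, again contradicting the minimality of $P_{opt}$ for ${\rm P10}$. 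The main (and only nontrivial) obstacle is justifying the scaling/continuity step cleanly, which reduces to the elementary observation that $\log_2|\bI + c\bM|$ with $\bM \succeq \bo$ is continuous and strictly increasing in $c \geq 0$ whenever $\bM \neq \bo$; the degenerate case $\bH_1\bR_{opt}\bH_1^{\rm H} = \bo$ would force $\gamma \leq 0$ and would need to be handled as a trivial side case.
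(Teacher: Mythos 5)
Your proof is correct and follows the same contradiction route the paper itself indicates (the paper merely remarks that the result "can be easily shown by contradiction or by comparing the respective KKT conditions," noting that the QoS constraint is active at the optimum of P10). You have simply carried that sketch out in full, including the scaling/continuity argument and the degenerate case $\bH_1\bR_{opt}\bH_1^{\rm H}=\bo$, which the paper leaves implicit.
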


Proposition 2 indicates that the optimal solution  $\bR_{opt}$ for ${\rm P10}$ also solves the dual problem of maximizing the channel capacity subject to TPC ${\rm tr}(\bR)\leq P_{opt}$ in ${\rm P10'}$. This can be easily shown by contradiction or by comparing the respective KKT conditions of each problem, which are necessary for optimality \cite{Loyka-15}. Note that the optimal solution  $\bR_{opt}$ for  ${\rm P10}$ always makes the QoS constraint hold with equality. With this proposition, the work reduces to find a proper $P_{opt}$ such that the optimal value for ${\rm P10'}$ is $\gamma$. Since ${\rm P10'}$ is a general channel capacity optimization problem, the objective function $C'(\bR)$  is non-decreasing with $P_{opt}$. Hence, BS can be applied to find the proper $P_{opt}$ such that
the optimal value $C'(\bR)=\gamma$. 

With optimized $\bR$, the next step is to optimize $\bQ$ from the following sub-problem.
\bal
\notag
{\rm P11}: {\rm Find}\ \bQ,\ s.t.\ \log_2|\bI+\bH_1\bR\bH_1^{\rm H}|\geq\gamma, |q_i|=1, \forall i.
\eal
Note that there is no objective function in this problem,   any feasible $\bQ$ satisfying the  QoS and UMC can be as the optimal solution for ${\rm P11}$. In fact, if the feasible solution $\bQ$ obtained for ${\rm P11}$ achieves a strictly larger communication rate than the target rate $\gamma$, then the minimum transmit power in ${\rm P10}$ returned by BS can be properly reduced without violating the QoS constraint. Hence, the work reduces to maximize  $\log_2|\bI+\bH_1\bR\bH_1^{\rm H}|$ to be as large as possible, which is equivalent to solve the following problem ${\rm P11'}$.
\bal
{\rm P11'}: \underset{\bQ}{\max}\ \log_2|\bI+\bH_1\bR\bH_1^{\rm H}|,\  s.t.\  |q_i|=1, \forall i.
\eal 
Observing that OBO optimization Algorithm 2 can be easily applied to solve this problem by setting $\bH_{AE}=\bo, \bH_{IE}=\bo$ so that given fixed $n-1$ phase shift coefficients, the optimal solution for $q_i$ is either $e^{-j{\rm arg}(\bar{\lambda}_i)}$ or any feasible solution.

The proposed AO algorithm for solving ${\rm P9}$ is summarized as Algorithm 4. In this algorithm, since $\bR$ and $\bQ$ are  optimized independently, and also $\bR$ and $\bQ$ are both bounded by the  constraints. Hence, the objective function $\rm tr(\bR)$ is non-increasing in Algorithm 4, and a limit point solution is guaranteed to converge.

\begin{algorithm}[h]
	\caption{(\it AO algorithm of solving {\rm P9})}
	\begin{algorithmic}
		\State \bRequire\  $\epsilon_3 >0$, $\gamma$.
        \State 1. Initialize a feasible starting point $\bQ_{ini}$.
        \State 2. Solve  P10' via BS to obtain  $\bR_{ini}$ for P9 given $\bQ_{ini}$, compute $P_0={\rm tr}(\bR_{ini})$.
      \Repeat\ \ 
         \State 3. Solve P11' via OBO optimization to obtain a sub-optimal solution $\bQ_{opt}$ given $\bR_{ini}$. 
         \State 4. Solve  P10' via BS to obtain the optimal solution $\bR_{opt}$ for P9 given $\bQ_{opt}$, compute $P_1={\rm tr}(\bR_{opt})$.
		  \State 5.  If $|P_1-P_0|/|P_0|>\epsilon_3$, set $\bR_{ini}=\bR_{opt}$ and $P_0=P_1$, go back to step 3.
		\Until{$|P_1-P_0|/|P_0|\leq\epsilon_3$} 
       \State 7. Output  $\bR_{opt}$, $\bQ_{opt}$ as the final limit point solution of P9.  
	\end{algorithmic}
\end{algorithm}

After obtaining the minimum power $P_{min}={\rm tr}(\bR_{opt})$, the next step is to transmit AN to jam Eve using the residual transmit power $P-P_{min}$. To ensure the QoS at Bob is not affected by AN, we set the directions for signaling AN to $\rm{null}(\bH_1)$, i.e., the null space of the effective channel $\bW_1=\bH_1^{\rm H}\bH_1$,  and  apply equal power allocation to transmit AN to each dimension of ${\rm null}(\bW_1)$. 
Therefore, the transmit covariance for AN is formulated as
\bal
\bR_{AN}=\frac{P-P_{min}}{m-{\rm rank}(\bW_1)}\bU_{AN}\bU_{AN}^{\rm H}
\eal
where the columns in the semi-unitary matrix $\bU_{AN}$ are all $m-{\rm rank}(\bH_1)$ eigenvectors corresponding to zero eigenvalues of $\bW_1$. Hence,  the final actual achievable secrecy rate by this AN aided joint transmission scheme is 
\bal
C_s=\gamma-\log_2|\bI+\frac{\bH_2\bR\bH_2^{\rm H}}{\bI+\bH_2\bR_{AN}\bH_2^{\rm H}}|.
\eal
Note that $\bH_1$ should be full row rank matrix so that the null space of $\bW_1$ exists. Hence, our proposed scheme only holds for the case when $m>d$. If $m\leq d$, a possible  solution is that Bob can turn off some receiving antennas so as to make the number of the rest active antennas to be less than $m$, but the price is that the residual power for AN signaling could be decreased since the degree of freedom between Alice and Bob is reduced so that more transmit power needs to be consumed to meet the QoS constraint.

\subsection{A Special Case Where the Direct Link Between Alice-Bob/Eve is Blocked}

In this subsection, we consider a special case where the direct communication link between Alice-Bob and Alice-Eve  are blocked. Such case has high probability to occur in city’s hot spot, mountainous area, and other indoor environment due to  obstacles. Then, the function of IRS is to create a virtual line-of-sight link between Alice and Bob/Eve  so as to help the signals bypass the obstacle. And this is the key reason at first why IRS draws great attention by academic and industry. Obviously, the proposed AN aided joint transmission  scheme in the previous subsection can be easily applied to this case to enhance the secrecy rate by simply setting $\bH_{AB}=\bo$ and $\bH_{AE}=\bo$. In particular, apart from OBO optimization in the AO algorithm, in this subsection, we propose a MM algorithm to optimize  $\bQ$ given $\bR$ for the power minimization problem. The key  idea of MM algorithm is to firstly approximate the original non-convex problem to a more tractable formular, in which the objective function is  approximated to a linear lower  bound (i.e., surrogate function), and then the approximated problem is optimized iteratively by initializing a feasible starting point. If the bound is constructed properly, any converged point  generated by MM is a KKT point for the original problem. For detailed explanations of MM, please refer to \cite{Sun-17}. Note that different from OBO optimization, all $n$ phase shifts are simultaneously optimized in the MM algorithm.

Specifically, when $\bH_{AB}=\bo$, problem ${\rm P11'}$ transfers to the following ${\rm P12}$.
\bal
\notag
{\rm P12}: &\underset{\bQ}{\max}\ g(\bQ)=\log_2|\bI+\bH_{IB}\bQ\bH_{AI}\bR\bH_{AI}^{\rm H}\bQ^{\rm H}\bH_{IB}^{\rm H}|,\\
\notag
 &s.t.\  |q_i|=1, \forall i.
\eal
The following proposition gives the surrogate function of $g(\bQ)$ as well as the closed-form solutions of $\bQ$ during each iteration of MM algorithm.

\begin{prop}
Let $\tilde{\bQ}$ be a feasible point for ${\rm P12}$, then $g(\bQ)$ can be lower bounded by
\bal
\notag
g(\bQ)\geq &-2n\gl_1(\bZ)+2{\rm Re}\{\bq^{\rm H}(\gl_1(\bZ)\bI-\bZ)\tilde{\bq}\}+\tilde{\bq}^{\rm H}\bZ\tilde{\bq}\\
\notag
&+2{\rm Re}\{\bq^{\rm H}\ba_4\}+\sum_{j=1}^{2}C_j(\tilde{\bQ})=\tilde{g}(\bQ,\tilde{\bQ})
\eal
where $\bq=[e^{j\theta_1},e^{j\theta_2},...,e^{j\theta_n}]^{\rm T}$,  ${\rm diag}(\tilde{\bq})=\tilde{\bQ}$. Hence the closed-form solution of $\bQ$ given $\tilde{\bQ}$ during each iteration of MM algorithm is given by
\bal
\label{optimal_q}
\bQ={\rm diag}([e^{j{\rm arg}(v_1)}, e^{j{\rm arg}(v_2)},..., e^{j{\rm arg}(v_n)}]^{\rm T}).
\eal
\end{prop}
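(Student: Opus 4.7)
The plan is to construct an affine lower bound of $g(\bQ)$ in the vectorized phase shifts $\bq=[q_1,\dots,q_n]^{\rm T}$ via three successive MM surrogate steps, so that the per-iteration subproblem separates coordinate-wise and admits the closed-form solution $q_i=e^{j{\rm arg}(v_i)}$. First I would identify $\bQ={\rm diag}(\bq)$, note that the UMC implies $\bq^{\rm H}\bq=n$, and use $|\bI+\bX\bY|=|\bI+\bY\bX|$ to recast $g(\bQ)$ as $\log_2|\bI+\bQ^{\rm H}\bH_{IB}^{\rm H}\bH_{IB}\bQ\bH_{AI}\bR\bH_{AI}^{\rm H}|$. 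This is the right form to expand in entries of $\bq$, because $({\rm diag}(\bq)\bA\,{\rm diag}(\bq)^{\rm H})_{ij}=q_iq_j^*A_{ij}$ turns the matrix into a Hermitian form in $\bq$.

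The first approximation linearizes the $\log|\cdot|$ factor. Because $\log|\bI+\bX|$ is concave in Hermitian $\bX$, an ordinary tangent plane gives only an upper bound, so I would instead invoke a second-order descent-lemma-style inequality (or a purpose-built log-det lower bound on the compact feasible set) anchored at $\tilde{\bQ}$ to produce a valid lower bound that, after substitution, is quadratic in $\bq$ plus a constant $C_1(\tilde{\bQ})$ calibrated so that tightness at $\bQ=\tilde{\bQ}$ holds. The second approximation then collapses the matrix expression into the scalar Hermitian form $\bq^{\rm H}\bZ\bq+2{\rm Re}\{\bq^{\rm H}\ba_4\}+C_2(\tilde{\bQ})$ by extracting the entry-wise structure above and grouping the cross terms between $\bQ$ and the fixed matrices $\bH_{IB}$, $\bH_{AI}$, $\bR$; the explicit expressions for $\bZ$, $\ba_4$, $C_1$, $C_2$ fall out of this bookkeeping.

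The third approximation is the standard unit-modulus MM step. Since $\lambda_1(\bZ)\bI-\bZ\succeq\bo$, the inequality $(\bq-\tilde{\bq})^{\rm H}(\lambda_1(\bZ)\bI-\bZ)(\bq-\tilde{\bq})\geq 0$, together with $\bq^{\rm H}\bq=\tilde{\bq}^{\rm H}\tilde{\bq}=n$, yields $\bq^{\rm H}\bZ\bq\leq 2n\lambda_1(\bZ)-2{\rm Re}\{\bq^{\rm H}(\lambda_1(\bZ)\bI-\bZ)\tilde{\bq}\}-\tilde{\bq}^{\rm H}\bZ\tilde{\bq}$; substituting this into the surrogate from step two produces exactly $\tilde{g}(\bQ,\tilde{\bQ})$ as stated. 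Each of the three inequalities is tight at $\bQ=\tilde{\bQ}$ by construction, so $\tilde{g}$ is a valid MM majorizer. Maximizing $\tilde{g}$ then reduces to maximizing $2{\rm Re}\{\bq^{\rm H}\bv\}$ with $\bv=(\lambda_1(\bZ)\bI-\bZ)\tilde{\bq}+\ba_4$ over $|q_i|=1$, which separates into $n$ scalar subproblems $\max_{|q_i|=1}{\rm Re}\{q_i^*v_i\}$, solved pointwise by $q_i=e^{j{\rm arg}(v_i)}$, recovering \eqref{optimal_q}.

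The main obstacle will be the first approximation, since the concavity of $\log|\bI+\bX|$ points the wrong way: I need a genuine \emph{global} lower bound, obtained either through Lipschitz smoothness of $\nabla\log|\bI+\bX|$ on the compact feasible set or through a tailored log-det inequality. Once that direction is secured, the remaining work --- propagating $\bZ$, $\ba_4$ and the constants $C_j(\tilde{\bQ})$ through the change of variables to $\bq$ and checking tightness at $\tilde{\bQ}$ --- is mechanical but notation-heavy and deserves the care of an appendix-level derivation.
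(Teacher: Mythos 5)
Your third approximation and the final coordinate-wise maximization of ${\rm Re}\{\bq^{\rm H}\bv\}$ match the paper's argument, but your first approximation is a genuine gap, and you correctly sense it yourself when you call it ``the main obstacle.'' You never actually produce the lower bound on the log-determinant; you only gesture at a descent-lemma or Lipschitz-smoothness argument, which would introduce a negative-definite quadratic remainder $-\tfrac{L}{2}\|\bX-\tilde{\bX}\|^2$ and a constant $L$ that you would still have to estimate over the feasible set --- and which would not reproduce the specific surrogate in the statement. The paper's resolution is a sign flip you missed: by the matrix inversion lemma, $g(\bQ)=-\log_2|\bI-\bP(\bI+\bP^{\rm H}\bP)^{-1}\bP^{\rm H}|$ with $\bP=\bH_{IB}\bQ\bL^{1/2}$, $\bL=\bH_{AI}\bR\bH_{AI}^{\rm H}$. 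Writing $g$ as \emph{minus} a log-determinant means the ordinary concavity tangent bound $\log_2|\bA|\le\log_2|\tilde{\bA}|+{\rm tr}(\tilde{\bA}^{-1}(\bA-\tilde{\bA}))$, applied to $\bA=\bI-\bP(\bI+\bP^{\rm H}\bP)^{-1}\bP^{\rm H}$, becomes exactly the global lower bound on $g$ that you were looking for --- no smoothness constant needed.

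The second discrepancy is that your ``second approximation'' is not an approximation at all: you describe it as bookkeeping that collapses the matrix expression into a Hermitian form in $\bq$. In the paper the second step is a genuine inequality: after the first bound one is left with $h_B(\bQ)={\rm tr}(\tilde{\bQ}_B^{-1}\bP(\bI+\bP^{\rm H}\bP)^{-1}\bP^{\rm H})$, which is still not quadratic in $\bq$ because $\bP$ appears inside the inverse $(\bI+\bP^{\rm H}\bP)^{-1}$. The paper lower-bounds this using the joint convexity of the matrix fractional function ${\rm tr}(\bA\bX\bB^{-1}\bX^{\rm H})$ in $(\bX,\bB)$ (its first-order expansion at $(\tilde{\bP},\bI+\tilde{\bP}^{\rm H}\tilde{\bP})$ is a global under-estimator), and only then applies the Hadamard-product identity ${\rm tr}(\bV^{\rm H}\bA\bV\bB)=\bv^{\rm H}(\bA\odot\bB^{\rm T})\bv$ to reach $-\bq^{\rm H}\bZ\bq+2{\rm Re}\{\bq^{\rm H}\ba_4\}$. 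Without that intermediate convexity bound your chain of inequalities cannot reach a quadratic form in $\bq$, so the surrogate in the statement is not derivable along the route you outline.
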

\begin{proof}
To optimize $\bQ$ in ${\rm P12}$ via MM, a proper lower bound of $g(\bQ)$ should be formulated. Note that $g(\bQ)$ is a complicated log determinant function, it is difficult to directly obtain its lower bound with only one time approximation as in the MISO case \cite{Shen-19}\cite{Yu-19}. Hence, we apply three successive approximations to obtain the proper lower bound of $g(\bQ)$ so that MM algorithm can be applied to optimize $\bQ$. Please see detailed  expression of $\bZ$, $\bq$, $C_j(\tilde{\bQ})$, $v_i, i=1,2,...,n$ as well as the proof in Appendix.
\end{proof}

Therefore, based on this proposition,  a KKT  solution of ${\rm P12}$ given fixed $\bR$ can be obtained. The BS and MM combined AO algorithm for power minimization when the direct link $\bH_{AB}=\bo$  is summarized as Algorithm 5. In this algorithm, $\bQ_{ini}$ is the starting point for the outer loop of AO algorithm and $\tilde{\bQ}$ is the starting point for the inner loop of MM algorithm. As the convergence is reached, a limit point solution for the power minimization problem can be obtained. In the MM algorithm, the main computational complexity comes from computing $\gl_1(\bZ)$ in each iteration, which is about $O(2n^3)$. Once the minimum power satisfying QoS constraint is obtained,  AN is used to signalling over the null space of $\bH_{IB}\bQ\bH_{AI}$ using the residual power at Alice so as to jam Eve. Our extensive simulation tests have shown that although Algorithm 5 have less performance on power minimization as well as enhancing secrecy rate compared with Algorithm 4, it has  faster speed of convergence with less than around 1 to 5 iterations in most randomly generated channels.

\begin{algorithm}[h]
	\caption{(\it AO algorithm for solving {\rm P12})}
	\begin{algorithmic}
		\State \bRequire\  $\epsilon_4 >0$, $\gamma$.
        \State 1. Initialize a feasible starting point $\bQ_{ini}=\bI$, $\bR_{ini}=P\bI/m$, set $P_0=P$.
      \Repeat\ \ 
          \State 2. Initialize starting point $\tilde{\bQ}=\bI$, compute $\tilde{C}_0=\tilde{g}(\tilde{\bQ},\tilde{\bQ})$ given $\bR_{ini}$.
         \Repeat\ \ (MM algorithm)
                \State 3. Optimize $\bQ$ via \eqref{optimal_q}, and compute $\tilde{C}_1=\tilde{g}(\bQ,\tilde{\bQ})$.
                 \State 4. If $|\tilde{C}_1-\tilde{C}_0|/|\tilde{C}_0|$ does not converge, set $\tilde{C}_0=\tilde{C}_1$ and $\tilde{\bQ}=\bQ$.    
              \Until{$|\tilde{C}_1-\tilde{C}_0|/|\tilde{C}_0|$ converges} 
              \State 5. Set $\bQ_{opt}=\bQ$ as the KKT solution of P12.
         \State 6.  Obtain the optimal solution $\bR_{ini}$ via BS given $\bQ_{opt}$, compute $P_1={\rm tr}(\bR_{opt})$.
		  \State 7.  If $|P_1-P_0|/|P_0|>\epsilon_4$, set $\bR_{ini}=\bR_{opt}$ and $P_0=P_1$, go back to step 3.
		\Until{$|P_1-P_0|/|P_0|\leq\epsilon_4$}  
	\end{algorithmic}
\end{algorithm}

\section{Simulation Results}

To validate the performance of our proposed AO algorithms,  extensive simulation results have been carried out in this section. Following \cite{Shen-19},  we consider a fading environment, and all the channels are  formulated as   the product of large scale fading and small scale fading. The entries in the small scale fading matrix are   randomly generated  with complex zero-mean
Gaussian random variables with unit covariance. For the large scale fading in all links, the path loss is set as -30dB at reference distance 1m, and path loss exponents for all the links is set as 3. And we assume  that the distance between Alice and Bob, Alice and IRS, Alice and Eve, IRS and Bob, IRS and Eve are set as 80m, 30m, 80m, 40m and 40m respectively. In AO Algorithm 3, 4 and 5, we set all the target accuracy  as $\epsilon_2=\epsilon_3=\epsilon_4=10^{-4}$, and all the target accuracy  for BS algorithm as $10^{-4}$. In addition,  $\epsilon_1=10^{-8}$, $\alpha=0.3$,  $\beta=0.5$, $\eta=5$, $t_0=10^2$, $t_{max}=10^5$ in Algorithm 1 and target accuracy for MM is $10^{-4}$ in Algorithm 5. Note that all the simulation results illustrated in Fig.2 to Fig.4 and Fig.6 are averaged over 100 randomly generated channels, and all the results in Fig.5, Fig.7 to Fig.10 are computed based on single randomly generated channel. 

\subsection{Secrecy Rate of IRS-Assisted MIMO WTC Under Full CSI}

In this subsection, the performance of proposed AO Algorithm 3 for maximizing the secrecy rate under full CSI is provided. In Fig.2, we compare the average secrecy rate performance of our AO Algorithm 3
with three benchmark schemes: 1). optimize $\bR$ via  Algorithm 1 given zero phase
shift (i.e., $\bQ = \bI$) at IRS; 2). optimize $\bR$  via  Algorithm 1 without  IRS (i.e., $\bQ = \bo$); 3). AN aided solutions\footnote{We remark that although the system model illustrated in \cite{Fang-15} is a cognitive radio MIMO WTC with simultaneous wireless
information and power transfer, its proposed AN method with sub-optimal algorithm also applies for the  MIMO
WTC model.} without IRS in \cite{Fang-15}. According to the result, we note that our Algorithm 3 has significantly better performance than the other three benchmark schemes. The main reason is that the reflecting link signals are not only constructively added with the direct link signals at Bob, but also destructively added with the direct link signals at Eve and hence the secrecy rate can be boosted. For the  three benchmark schemes, it can be seen that  the solution with and without AN under no IRS have very limited performance on enhancing secrecy rate. Furthermore,  although zero phase  shift solutions with IRS have better performance than that without IRS, it still has a large gap compared with the results returned by Algorithm 3. The reason is  that the phase shift $\bQ$ is unchanged, i.e.,  IRS doesn't truly change the propagation channels. Therefore,  only by jointly optimizing  $\bR$ and $\bQ$ can we give full play to the advantages of IRS on enhancing the secrecy performance.

\begin{figure}[t]%[htbp]
	\centerline{\includegraphics[width=3.0in]{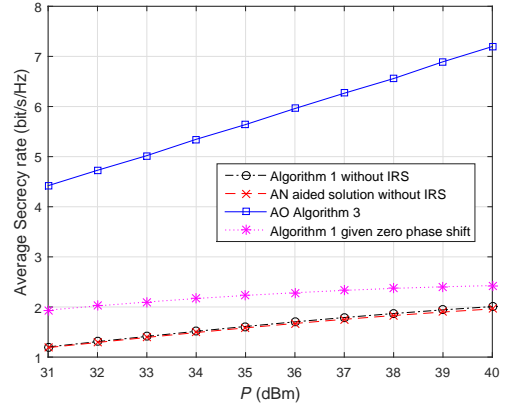}}
	\caption{  Secrecy rate via proposed  AO Algorithm 3 and other benchmark schemes based on $m=d=e=4, n=6$.}
\end{figure}

Fig.3 illustrate the performance of Algorithm 3 and other benchmark schemes versus the number of reflecting element $n$  at IRS and number of antenna $e$ at Eve respectively. Note that  the secrecy rate returned by Algorithm 3 increases with $n$ significantly since more reflecting elements brings more new spacial degree of freedom. Furthermore, for the zero phase shift scheme in which only $\bR$ is optimized, only by increasing $n$ has very limited performance gain on the secrecy rate, even the secrecy rate decreases with $n$. The main reason is that unoptimized $\bQ$ is possible to make the Eve's effective channel $\bH_2$ ``better" than Bob's effective channel $\bH_1$.  We also note that  the secrecy rate returned by Algorithm 3 and other benchmark schemes all decrease with $e$. This is inevitable since with more value of $e$,  the   sufficient spatial degree of freedom for Alice's transmission is decreased. However, as can be seen, given fixed $e$, larger secrecy rate still can be obtained by our proposed AO algorithm for  the IRS-assisted design than the other solutions. In fact, if Eve is equipped with more antennas, an effective solution is to deploy more reflecting elements at IRS so as to enhance the secrecy rate. This can be easily realized in practical system, since the reflecting elements in IRS are with very low complexity, low power consumption and can be massively deployed.

\begin{figure}[t]%[htbp]
	\centerline{\includegraphics[width=3.0in]{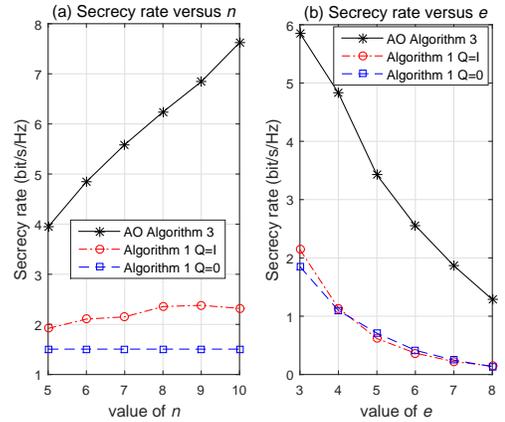}}
	\caption{Achieved secrecy rate versus the number of $n$ and $e$, $m=4, d=e=3$ in (a), and $m=d=3, n=6$ in (b).}
\end{figure}

\subsection{Secrecy Rate of IRS-Assisted MIMO WTC Under No Eve's CSI}

In this subsection, we provide some simulation results about the performance of proposed AN aided joint transmission scheme on enhancing the secrecy rate under completely no Eve's CSI. Fig.4 shows the performance of averaged secrecy rate returned by the scheme with and without AN versus the target QoS $\gamma$ at Bob under different settings of $m, n$. Based on the averaged results, note that   positive secrecy rate (i.e., $C_B>C_E$) also can be achieved by our proposed scheme so that secure communication can be guaranteed. Also note that without the aid of AN, only guaranteeing QoS  at Bob has very limited performance on enhancing the secrecy rate. In this scheme, as long as $\gamma$ is properly set,  Alice could have abundant residual power to jam Eve via AN signalling so that the channel capacity at Alice can be larger than that at Eve. In fact, based on our extensive simulation results, when $\gamma=6$, only about less than $30\%$ of the total transmit power is consumed to meet the QoS constraint by Algorithm 4, and more than half of the total power are all utilized to jam Eve via  AN, resulting positive secrecy rate.

\begin{figure}[t]%[htbp]
	\centerline{\includegraphics[width=3.0in]{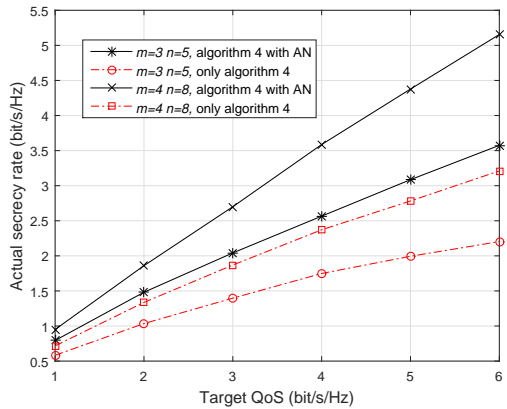}}
	\caption{Achieved actual secrecy rate versus Target $\gamma$ under different settings of $m$ and $n$. Total transmit power at Alice is set as $35 {\rm dBm}$.}
\end{figure}

To exploit how the secrecy performance returned by the AN aided scheme goes when the
target $\gamma$ is large under no Eve’s CSI, Fig.5 shows the actual secrecy rate versus $\gamma$ under
different settings of total power $P$. Note that for each setting of $P$, the secrecy rate firstly
increases with $\gamma$, since the transmission rate $C_B =\gamma$ at Bob dominates so that $C_s$ increases with
$\gamma$. However, as $\gamma$ grows to higher, $C_s$ starts to decrease, since the sufficient residual
power $P-P_{min}$ for AN signaling is reduced significantly so that the information leakage to Eve
$C_E$ dominates. Finally, as $\gamma$ grows to high enough, the total power $P$ can not support to meet the QoS constraint so that ${\rm P9}$ becomes infeasible and hence secure communication is not achievable (e.g., the
secrecy rate stops at $\gamma=10$ when $P=30 {\rm dBm}$). Therefore, we see that there is a trade off
between increasing QoS at Bob and enhancing the secrecy performance. In addition to this result
based on single channel realization, such trade off also exists in our other extensive simulations.
Therefore, it is better to balance the setting between $\gamma$ and the
residual power $P-P_{min}$ so as to achieve a good secrecy performance.

\begin{figure}[t]%[htbp]
	\centerline{\includegraphics[width=3.0in]{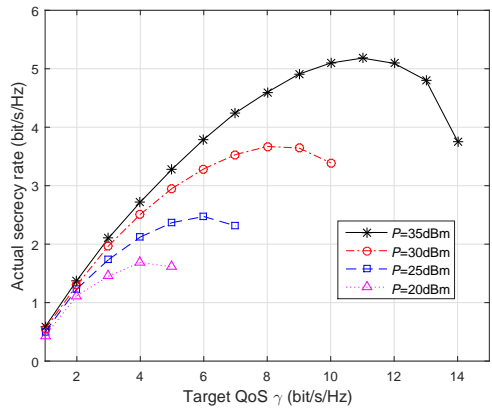}}
	\caption{ The actual secrecy rate returned by the proposed scheme for no Eve’s CSI versus $\gamma$ given different total power $P$ at Alice . The channels are randomly generated via $m=e=4, d=2, n=8$.}
\end{figure}

In Fig.6, we assume a special case where  the direct link $\bH_{AB}=\bo, \bH_{AE}=\bo$, and show the performance of averaged secrecy rate as well as minimized power (to meet the QoS constraint) returned by Algorithm 4 and Algorithm 5 versus $\gamma$. Observing that both Algorithms could guarantee positive secrecy rate, and Algorithm 4 achieves better performance than Algorithm 5, the main reason is that the optimized minimum power returned by Algorithm 4 is less than by Algorithm 5 (see ``(b)Minimum transmit power" in Fig.6) and hence more residual power for AN signaling can be saved via Algorithm 4. In addition to this results, our extensive simulation tests based on different channel realizations show that a  larger  actual secrecy rate   can be obtained by Algorithm 4 than that by Algorithm 5 at finite threshold of $\gamma$.

\begin{figure}[t]%[htbp]
	\centerline{\includegraphics[width=3.0in]{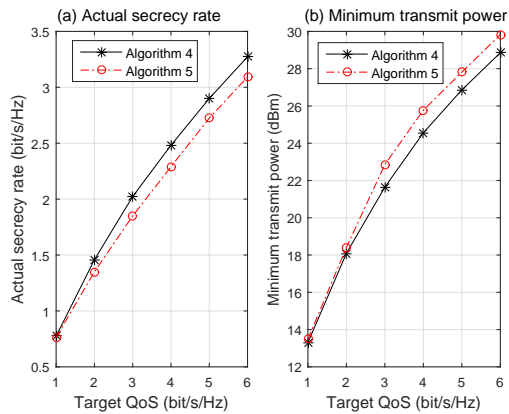}}
	\caption{ Actual secrecy rate and minimized power versus Target $\gamma$ under $\bH_{AB}=\bH_{AE}=\bo$, $m=4, n=8, d=e=2$. Total transmit power at Alice is set as $35 {\rm dBm}$.}
\end{figure}

\subsection{Convergence of the Proposed AO Algorithms}

In this subsection, we provide some numerical examples about the convergence of proposed AO Algorithm 3, 4 and 5 under different value of $m,n,d,e$ and target threshold $\gamma$. Fig.7 illustrates the convergence of the objective function
$C_s(\bR_k,\bQ_k)$ versus the number of iterations $k$ in Algorithm 3 under randomly generated channels. Based on the results, it requires 19 to 101 steps for $C_s(\bR_k,\bQ_k)$ to
converge to the accuracy of $10^{-4}$ for each considered setting, also note that
the process of convergence is monotonically increasing. In fact, given
fixed target accuracy, larger settings of $m$ and $n$ leads to larger
dimensions of variable $\bR$ and $\bQ$ so that the  algorithm
requires more iterations to optimize each element of these
variables. In addition to these results, our extensive simulations
show a perfect monotonic convergence of Algorithm 3.

\begin{figure}[t]%[htbp]
	\centerline{\includegraphics[width=3.0in]{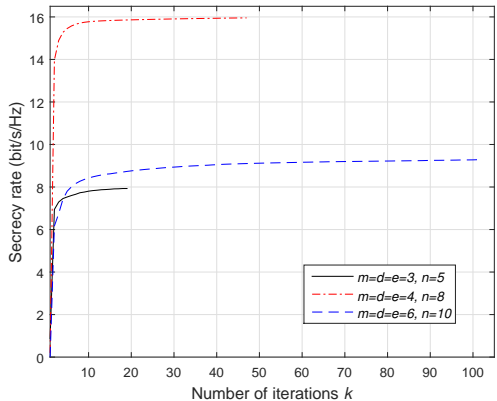}}
	\caption{Convergence of the proposed Algorithm 3 under different settings of $m, n, d, e$. $\gamma$ is fixed as 3.}
\end{figure}

Fig.8 illustrate the convergence of  Algorithm 4  for solving ${\rm P9}$. Note that  the objective function ${\rm tr}(\bR_k)$ is non-increasing with number of iteration $k$ and finally converge. Furthermore, the convergence is fast with only 5 to 8 steps to reach the target accuracy. Fig.9 compares the convergence between Algorithm 4 and 5 for solving ${\rm P9}$ under $\bH_{AB}=\bo$.  As can be seen, Algorithm 5 has a slightly faster convergence with less than 2 to 3 steps compared with Algorithm 4. But when both algorithm converges, the minimum power returned by Algorithm 5 is still less than that by  Algorithm 4, which  is consistent with those results in Fig.6. In addition to this results, our other extensive simulation tests indicates that a faster speed of convergence (with less than around 1 to 5 steps) than Algorithm 4 can be achieved.

\begin{figure}[t]%[htbp]
	\centerline{\includegraphics[width=3.0in]{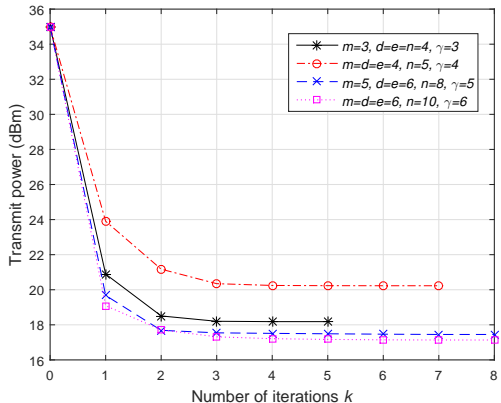}}
	\caption{ Convergence of the proposed Algorithm 4 under different settings $m, n, d, e$ and $\gamma$. Total transmit power at Alice is set as $35 {\rm dBm}$.}
\end{figure}

\begin{figure}[t]%[htbp]
	\centerline{\includegraphics[width=3.0in]{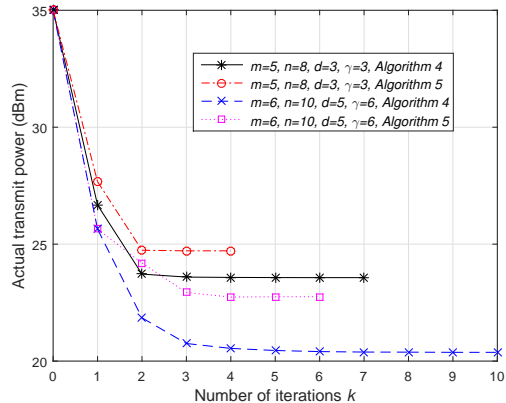}}
	\caption{Convergence comparison between  Algorithm 4 and 5 when $\bH_{AB}=\bH_{AE}=\bo$ under different settings of $m, n, d$ and $\gamma$. Total transmit power at Alice is set as $35 {\rm dBm}$.}
\end{figure}

 Finally, Fig.10 shows the convergence of  the value of objective function $f(\bR,\bK)$ and $C(\bR)$ in Algorithm 2 versus  number of iterations $k$. It can be noted that both $f(\bR,\bK)$ and $C(\bR)$ gradually converge and coincide together as $t$ increases to $t_{max}$. In particular, it
takes few more steps for $C(\bR)$ to converge than $f(\bR,\bK)$, which indicates that $C(\bR)$ is less
sensitive in $\bR$ than that in $f(\bR,\bK)$. Hence, it is necessary to set $t_{max}$ large enough in barrier method so that the optimized $\bR$ for P3 is guaranteed to be a global optimal solution for P2. Furthermore, this results also have validated the correctness of our re-derived gradient and Hessians of the barrier objective function $f_t(\bR,\bK)$ under complex-valued channel case shown in Appendix.

\begin{figure}[t]%[htbp]
	\centerline{\includegraphics[width=3.0in]{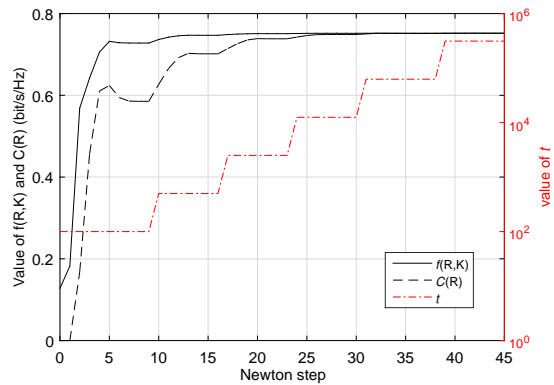}}
	\caption{ Convergence of $f(\bR,\bK)$ and $C(\bR)$ in Algorithm 1 when $m=d=e=4, n=6$ and $P=35{\rm dBm}$. As $t$ increases to the threshold $t_{max}$, both $f(\bR,\bK)$ and $C(\bR)$ converges and coincide together.}
\end{figure}

%==================================================================================
\section{Concluding Remarks and Future Directions}
In this paper, the secrecy rate optimization of IRS-assisted Gaussian MIMO WTC is studied.  When full CSI is assumed, a barrier method and OBO optimization combined AO algorithm is proposed to   jointly maximize the transmit covariance $\bR$ at Alice as well as phase shift coefficient $\bQ$ at IRS. When no Eve's CSI is assumed, we propose an AN aided joint transmission scheme to enhance the secrecy rate.  In this scheme, a BS and OBO combined AO algorithm is proposed to jointly optimize $\bR$ and $\bQ$. This AN aided scheme is further applied to enhance the secrecy rate under a special scenario where the direct communication between Alice and Bob/Eve is blocked. In particular, a BS and MM combined AO algorithm with slightly faster convergence is proposed to jointly optimize $\bR$ and $\bQ$. Simulation results  have validated the monotonic convergence of the proposed algorithms, and it is shown that the proposed algorithms for the IRS-assisted design achieve significantly larger secrecy rate than the other benchmark schemes with and without IRS under full CSI. When Eve's CSI is unknown, the secrecy performance can also be greatly enhanced by the proposed AN aided joint transmission scheme.

A future direction of our work is to exploit the solutions for enhancing the robust secrecy rate of IRS-assisted MIMO WTC when the CSI of the direct and reflecting channel links are imperfectly known at Alice due to estimation errors. This is a more challenging study  since in the secrecy rate optimization problem, the objective function as well as the constraints become more complicated due to the  bounded channel estimation errors.  Therefore, new numerical algorithm should be redesigned to jointly optimize $\bR$ and $\bQ$ as well as its proof of convergence. The study of robust secure IRS-assisted MIMO channel also provides significant references for the application in practical  system design.

\section{Appendix}

\subsection{Closed-form Expressions for Gradients and Hessians in Algorithm 1}

After some manipulations, we provide below the analytical expressions for the gradients  and
Hessians obtained,  using the standard rules of matrix differential calculus (see Chapter 3 to 6 in \cite{Hjorungnes-11}). The gradient of  $f_t(\bR,\bK)$ respect to $\br$ and $\bn$ are expressed as
\bal
\notag
&\nabla_{\br^{*}} f_t(\bR,\bK)=\bD_{\br}^{\rm T}{\rm vec}((\bI+\bH^{\rm H}\bK^{-1}\bH\bR)^{-1}\bH^{\rm H}\bK^{-1}\bH-(\bI
\\
\notag
&+\bH_2^{\rm H}\bH_2\bR)^{-1}\bH_2^{\rm H}\bH_2+t^{-1}\bR^{-1}-t^{-1}(P-{\rm tr}(\bR))^{-1}\bI),\\
\notag
&\nabla_{\bn^{*}} f_t(\bR,\bK)=\bD_{\bn}^{\rm T}{\rm vec}((\bK+\bH\bR\bH^{\rm H})^{-1}-(1+t^{-1})\bK^{-1}).
\eal
 Then, each entry of $\bT$ is expressed as
\bal
\notag
&\nabla_{\br^{*}\br^{\rm T}}^2 f_t(\bR,\bK)=-\bD_{\br}^{\rm T}(\bZ_1^{\rm T}\otimes\bZ_1-\bZ_2^{\rm T}\otimes\bZ_2\\
\notag
&+t^{-1}\bR^{-{\rm T}}\otimes\bR^{-1}+t^{-1}(P-{\rm tr}(\bR))^{-2}{\rm vec}(\bI){\rm vec}(\bI)^{\rm T})\bD_{\br},\\
\notag 
&\nabla_{\bn^{*}\bn^{\rm T}}^2 f_t(\bR,\bK)=-\bD_{\bn}^{\rm T}(\bZ_3^{\rm T}\otimes\bZ_3\\
\notag 
&\quad\quad\quad\quad\quad\quad\quad\quad-(1+t^{-1})\bK^{-{\rm T}}\otimes\bK^{-1})\bD_{\bn},\\
\notag
&\nabla_{\br^{*}\bn^{\rm T}}^2 f_t(\bR,\bK)=-\bD_{\br}^{\rm T}(\bZ_4^{\rm T}\otimes\bZ_4^{\rm H})\bD_{\bn}
\eal
where $\bZ_1=(\bI+\bH^{\rm H}\bK^{-1}\bH\bR)^{-1}\bH^{\rm H}\bK^{-1}\bH$, $\bZ_2=(\bI+\bH_2^{\rm H}\bH_2\bR)^{-1}\bH_2^{\rm H}\bH_2$, $\bZ_3=(\bK+\bH\bR\bH^{\rm H})^{-1}$, $\bZ_4=\bH^{\rm H}(\bK+\bH\bR\bH^{\rm H})^{-1}$.

\subsection{Proof of Proposition 1}
Firstly, by applying the key steps of deriving (9) and (15) in \cite{Zhang-19},  $C_B$ and $C_E$ given $\bR$ can be  expressed as
\bal
\notag
&C_B=\log_2|\bI+q_i\bA_i^{-1}\bB_i+q_i^*\bA_i^{-1}\bB_i^{\rm H}|+\log_2|\bA_i|,\\
\notag
&C_E=\log_2|\bI+q_i\bC_i^{-1}\bD_i+q_i^*\bC_i^{-1}\bD_i^{\rm H}|+\log_2|\bC_i|
\eal
where
\bal 
\notag
\bA_i=&\bI+(\bar{\bH}_{AB}+\sum_{j=1,j\neq i}^{n}q_j\br_j\bar{\bt}_j^{\rm H})(\bar{\bH}_{AB}+\sum_{j=1,j\neq i}^{n}q_j\br_j\bar{\bt}_j^{\rm H})^{\rm H}\\
\notag
&+\br_i\bar{\bt}_i^{\rm H}\bar{\bt}_i\br_i^{\rm H}, \\
\notag
\bB_i=&\br_i\bar{\bt}_i^{\rm H}(\bar{\bH}_{AB}^{\rm H}+\sum_{j=1,j\neq i}^{n}\bar{\bt}_j\br_j^{\rm H}q_j^*),\\
\notag
\bC_i=&\bI+(\bar{\bH}_{AE}+\sum_{j=1,j\neq i}^{n}q_j\bs_j\bar{\bt}_j^{\rm H})(\bar{\bH}_{AE}+\sum_{j=1,j\neq i}^{n}q_j\bs_j\bar{\bt}_j^{\rm H})^{\rm H}\\
\notag
&+\bs_i\bar{\bt}_i^{\rm H}\bar{\bt}_i\bs_i^{\rm H},\\
\notag
\bD_i=&\bs_i\bar{\bt}_i^{\rm H}(\bar{\bH}_{AE}^{\rm H}+\sum_{j=1,j\neq i}^{n}\bar{\bt}_j\bs_j^{\rm H}q_j^*)
\eal 
and where $\bar{\bH}_{AB}=\bH_{AB}\bU_{\bR}\bSigma_{\bR}^{0.5}$,  $\bar{\bH}_{AE}=\bH_{AE}\bU_{\bR}\bSigma_{\bR}^{0.5}$, $\bar{\bt}_i$ is the $i$-th column of $\bar{\bH}_{AI}^{\rm H}=[\bH_{AI}\bU_{\bR}\bSigma_{\bR}^{0.5}]^{\rm H}$, $\br_i$ and $\bs_i$ are the $i$-th column of $\bH_{IB}$ and $\bH_{IE}$ respectively.  
Since $\bA_i$ and $\bC_i$ are both full rank matrices and ${\rm rank}(\bB_i)\leq 1$, ${\rm rank}(\bD_i)\leq 1$, ${\rm rank}(\bA_i^{-1}\bB_i)\leq 1$ and ${\rm rank}(\bC_i^{-1}\bD_i)\leq 1$ hold. 
Hence, by dropping the constant term $\log_2|\bA_i|$ and $\log_2|\bC_i|$,  ${\rm P5}$ reduces to ${\rm P6}$. It can be known that $\bA_i^{-1}\bB_i$ and $\bC_i^{-1}\bD_i$ are strongly related to $q_i$. 
In the following, we consider 4 cases about the trace of $\bA_i^{-1}\bB_i$ and $\bC_i^{-1}\bD_i$ and propose the optimal solutions of $q_i$ based on each case.

   1). Case 1: ${\rm tr}(\bA_i^{-1}\bB_i)={\rm tr}(\bC_i^{-1}\bD_i)=0$. In this case, both $\bA_i^{-1}\bB_i$ and $\bC_i^{-1}\bD_i$ are non diagonalizable according to Lemma 2 in \cite{Zhang-19}. Therefore, by applying the  steps of deriving (21) in  \cite{Zhang-19}, the objective function in ${\rm P6}$ is further derived as
\bal
\notag
C'_B(q_i)-C'_E(q_i)=&\log_2|\bI-\bA_i^{-1}(\bA_i^{-1}\bB_i)^{\rm H}\bB_i|\\
\notag
&-\log_2|\bI-\bC_i^{-1}(\bC_i^{-1}\bD_i)^{\rm H}\bD_i|.
\eal
Obviously, this objective function  is irrelevant to $q_i$. Hence, any solution of $q_i$  satisfying $|q_i|=1$ is the optimal solution of ${\rm P6}$.

 2). Case 2: ${\rm tr}(\bA_i^{-1}\bB_i)\neq0, {\rm tr}(\bC_i^{-1}\bD_i)=0$. In this case, $\bA_i^{-1}\bB_i$ is diagonalizable and  $C'_E(q_i)$ is not affected by $q_i$. Let the eigenvalue decomposition of $\bA_i^{-1}\bB_i$ as $\bA_i^{-1}\bB_i=\bar{\bU}_i\bar{\bSigma}_i\bar{\bU}_i^{\rm H}$, where $\bar{\bU}_i$ is the unitary matrix, the columns of which are the eigenvectors, $\bar{\bSigma}_i={\rm diag}\{\bar{\lambda}_i,0,...0\}$, and let $\bar{\bV}_i=\bar{\bU}_i^{\rm H}\bA_i\bar{\bU}_i$, $\bar{\bv}_i$ denotes the first column of $\bar{\bV}_i^{-1}$, $\bar{\bv}_i^{'{\rm T}}$ denotes the first row of $\bar{\bV}_i$.  By applying the key steps of deriving (16) and (17) in \cite{Zhang-19}, $C'_B(q_i)$ is further expressed as
\bal
\label{C'_B}
C'_B(q_i)=\log_2(1+\bar{\lambda}_i^2(1-\bar{v}_{i1}^{'}\bar{v}_{i1})+2{\rm Re}\{q_i\bar{\lambda}_i\})
\eal
where $\bar{v}_{i1}$ and $\bar{v}_{i1}^{'}$ are the first elements in $\bar{\bv}_i$ and $\bar{\bv}_i^{'{\rm T}}$ respectively. Therefore, it can be directly obtained that the optimal solution for maximizing $C'_B(q_i)$ is  $q_i=e^{-jarg(\bar{\lambda}_i)}$.

 3). Case 3: ${\rm tr}(\bA_i^{-1}\bB_i)=0, {\rm tr}(\bC_i^{-1}\bD_i)\neq0$. In this case, $\bC_i^{-1}\bD_i$ is diagonalizable and  $C'_B(q_i)$ is not affected by $q_i$. Let the eigenvalue decomposition of $\bC_i^{-1}\bD_i$ as $\bC_i^{-1}\bD_i=\tilde{\bU}_i\tilde{\bSigma}_i\tilde{\bU}_i^{\rm H}$, where $\tilde{\bU}_i$ is the unitary matrix, the columns of which are the eigenvectors, $\tilde{\bSigma}_i={\rm diag}\{\tilde{\lambda}_i,0,...0\}$, and let $\tilde{\bV}_i=\tilde{\bU}_i^{\rm H}\bC_i\tilde{\bU}_i$, $\tilde{\bv}_i$ denotes the first column of $\tilde{\bV}_i^{-1}$, $\tilde{\bv}_i^{'{\rm T}}$ denotes the first row of $\tilde{\bV}_i$. Similar with \eqref{C'_B}, $C'_E(q_i)$ can be further expressed as
\bal
\label{C'_E}
C'_E(q_i)=\log_2(1+\tilde{\lambda}_i^2(1-\tilde{v}_{i1}^{'}\tilde{v}_{i1})+2{\rm Re}\{q_i\tilde{\lambda}_i\})
\eal
where $\tilde{v}_{i1}$ and $\tilde{v}_{i1}^{'}$ are the first elements in $\tilde{\bv}_i$ and $\tilde{\bv}_i^{'{\rm T}}$ respectively. Hence, to maximize  P6, the work reduces to minimize  $C'_E(q_i)$, i.e., ${\rm Re}\{q_i\bar{\lambda}_i\}$ should be minimized. Therefore, it can be directly obtained that the optimal solution for minimizing $C'_E(q_i)$  is $q_i=e^{j(\pi-arg(\tilde{\lambda}_i))}$. 

4). Case 4: ${\rm tr}(\bA_i^{-1}\bB_i)\neq0, {\rm tr}(\bC_i^{-1}\bD_i)\neq0$. In this case, both  $\bA_i^{-1}\bB_i$ and  $\bC_i^{-1}\bD_i$ are diagonalizable so that the  both $C'_B(q_i)$ and $C'_E(q_i)$ are related to $q_i$. By combining \eqref{C'_B} and \eqref{C'_E} together, ${\rm P6}$ can be further expressed as 
\bal
{\rm P7}: \underset{q_i}{\max}\  \frac{\bar{c}_i+2{\rm Re}\{q_i\bar{\lambda}_i\}}{\tilde{c}_i+2{\rm Re}\{q_i\tilde{\lambda}_i\}},\ s.t.\  |q_i|=1, \forall i
\eal
where $\bar{c}=1+\bar{\lambda}_i^2(1-\bar{v}_{i1}^{'}\bar{v}_{i1}), \tilde{c}=1+\tilde{\lambda}_i^2(1-\tilde{v}_{i1}^{'}\tilde{v}_{i1})$.
P7 is a fractional programming optimization problem, which can be solved via Dinkelback method. Let $u\geq0$, and then we consider the following problem.
\bal
\notag
{\rm P8}: &\underset{q_i}{\max}\ f(q_i/u)=\bar{c}_i+2{\rm Re}\{q_i\bar{\lambda}_i\}-u(\tilde{c}_i+2{\rm Re}\{q_i\tilde{\lambda}_i\}),\\
\notag
&s.t. \ |q_i|=1, \forall i.
\eal
 Hence, according to the key concept of Dinkelbach method, finding the optimal solution of ${\rm P7}$ is equivalently  to finding the proper value of $u$ such that the optimal value of the objective function  $f(q_i/u)$ in P8 is zero. To obtain the proper $u$, the following key lemma is needed.
\begin{lemma}
Consider $q_{i}^{opt}$ is the optimal solution given fixed $u$ in P8, then $f(q_{i}^{opt}/u)$ is a monotonically decreasing function in $u$.
\end{lemma}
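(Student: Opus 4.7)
The plan is to treat $F(u) \triangleq \max_{|q_i|=1} f(q_i/u)$ as a parametric value function and exploit the fact that, for each fixed $q_i$, the map $u \mapsto f(q_i/u)$ is affine in $u$. Writing $g(q_i) \triangleq \tilde{c}_i + 2{\rm Re}\{q_i\tilde{\lambda}_i\}$, we have
\bal
\notag
f(q_i/u) = \bar{c}_i + 2{\rm Re}\{q_i\bar{\lambda}_i\} - u \, g(q_i),
\eal
so the slope in $u$ is exactly $-g(q_i)$. Since $g(q_i) = 2^{C'_E(q_i)}$ by definition of $C'_E(q_i)$ in \eqref{C'_E}, and $C'_E(q_i)$ is the logarithm of a determinant of a positive-definite matrix (hence well-defined and real), we have $g(q_i) > 0$ for every feasible $q_i$ on the unit circle. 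This strict positivity is the one ingredient without which the argument fails, and it is really the only non-cosmetic step.

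With that fact in hand, I would compare two threshold values $u_1 < u_2$ with corresponding optimizers $q_{i,1}^{opt}$ and $q_{i,2}^{opt}$, so that $F(u_j) = f(q_{i,j}^{opt}/u_j)$ for $j=1,2$. First, by optimality of $q_{i,1}^{opt}$ at the parameter $u_1$,
\bal
\notag
F(u_1) \;=\; f(q_{i,1}^{opt}/u_1) \;\geq\; f(q_{i,2}^{opt}/u_1).
\eal
Second, since $f(q/\cdot)$ is affine for fixed $q$,
\bal
\notag
f(q_{i,2}^{opt}/u_1) - f(q_{i,2}^{opt}/u_2) \;=\; (u_2 - u_1)\, g(q_{i,2}^{opt}) \;>\; 0,
\eal
using $u_2 > u_1$ and $g(q_{i,2}^{opt}) > 0$. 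Chaining these two inequalities gives $F(u_1) > f(q_{i,2}^{opt}/u_2) = F(u_2)$, which is exactly the claimed strict monotone decrease.

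In summary, the proof reduces to: (i) identify the affine dependence of $f(q_i/u)$ on $u$ with slope $-g(q_i)$, (ii) verify $g(q_i) > 0$ on the unit circle via the interpretation $g(q_i) = 2^{C'_E(q_i)}$, and (iii) apply the standard envelope-style inequality $F(u_1) \geq f(q_{i,2}^{opt}/u_1) > f(q_{i,2}^{opt}/u_2) = F(u_2)$. The main obstacle, such as it is, is step (ii); once positivity is secured, monotone decrease is immediate and in fact $F$ is the pointwise maximum of strictly decreasing affine functions, so it is both strictly decreasing and convex in $u$, which is what Dinkelbach's bisection on $u$ will later exploit.
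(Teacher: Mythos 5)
Your proof is correct and takes essentially the same route as the paper: both arguments compare the two optimizers at the common parameter $u_1$ (an optimality/envelope inequality) and then exploit the affine, strictly decreasing dependence of $f(q_i/u)$ on $u$ for fixed $q_i$; you merely chain the two inequalities in the opposite order. Your one substantive addition is explicitly justifying the strict positivity of the slope term $\tilde{c}_i+2{\rm Re}\{q_i\tilde{\lambda}_i\}=2^{C'_E(q_i)}>0$, which the paper's strict inequality relies on but leaves implicit.
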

\begin{proof}
Assume $0<u_1<u_2$, and denote $q_{i}^{opt1}$ and $q_{i}^{opt2}$ to be the optimal solution of P8 given $u_1$ and $u_2$ respectively, then
\bal
\notag
f(q_{i}^{opt2}/u_2)&=\bar{c}_i+2{\rm Re}\{q_{i}^{opt2}\bar{\lambda}_i\}-u_2(\tilde{c}_i+2{\rm Re}\{q_{i}^{opt2}\tilde{\lambda}_i\})\\
\notag
&<\bar{c}_i+2{\rm Re}\{q_{i}^{opt2}\bar{\lambda}_i\}-u_1(\tilde{c}_i+2{\rm Re}\{q_{i}^{opt2}\tilde{\lambda}_i\})\\
\notag
&\leq \bar{c}_i+2{\rm Re}\{q_{i}^{opt1}\bar{\lambda}_i\}-u_1(\tilde{c}_i+2{\rm Re}\{q_{i}^{opt1}\tilde{\lambda}_i\})\\
&=f(q_{i}^{opt1}/u_1).
\eal
Hence, $f(q_{i}^{opt2}/u_2)<f(q_{i}^{opt1}/u_1)$, from which the proof is complete.
\end{proof}
Based on this key lemma, BS algorithm can be applied to find the optimal $u$ such that $f(q_{i}^{opt}/u)=0$. Note that $f(q_i/u)$ can be further expressed as
$f(q_i/u)=\bar{c}_i-u\tilde{c}_i+2{\rm Re}\{q_i(\bar{\lambda}_i-u\tilde{\lambda}_i)\}$.
Hence, once the optimal $u$ is obtained, it can be directly obtained that the optimal solution for minimizing $C'_E(q_i)$  is $q_i=e^{-jarg(\bar{\lambda}_i-u\tilde{\lambda}_i)}$, from which the proof is complete.

\subsection{Proof of Proposition 3}

Firstly,   let $\bP=\bH_{IB}\bQ\bL^{\frac{1}{2}}$, $\bL=\bH_{AI}\bR\bH_{AI}^{\rm H}$, according to matrix inversion lemma, $g(\bQ)$ can be further expressed as $g(\bQ)=-\log_2|\bI-\bP(\bI+\bP^{\rm H}\bP)^{-1}\bP^{\rm H}|$.
To find the lower bound of $g(\bQ)$, we firstly introduce the following key lemma.
\begin{lemma}
For any matrix $\bA\in\mathbb{C}^{m\times m}$ and $\tilde{\bA}\in\mathbb{C}^{m\times m}$, 
\bal
\label{lemma_logA}
\log_2|\bA| \leq \log_2|\tilde{\bA}|+{\rm tr}(\tilde{\bA}^{-1}(\bA-\tilde{\bA})).
\eal
\end{lemma}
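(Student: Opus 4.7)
The plan is to prove the claim as the standard first-order concavity bound for $\log\det$ on the Hermitian positive definite cone, which is the implicit setting of the lemma (both $\bA$ and $\tilde{\bA}$ must be Hermitian positive definite for all quantities to be well-defined and for the inequality to hold with the stated direction). Every invocation of the lemma in the paper is on matrices of the form $\bI+\bX\bX^{\rm H}$, so this is consistent with the intended usage. I would prove it by direct reduction to a scalar inequality, rather than invoking the general theory of concave functions.

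The first step is a congruence transformation. Let $\tilde{\bA}^{1/2}$ denote the Hermitian square root of $\tilde{\bA}>\bo$, and define $\bB\triangleq\tilde{\bA}^{-1/2}\bA\tilde{\bA}^{-1/2}$. Then $\bB$ is Hermitian positive definite, $|\bA|=|\tilde{\bA}|\cdot|\bB|$, and by the cyclic property of the trace,
\[
{\rm tr}(\tilde{\bA}^{-1}(\bA-\tilde{\bA}))={\rm tr}(\tilde{\bA}^{-1/2}\bA\tilde{\bA}^{-1/2})-m={\rm tr}(\bB)-m.
\]
Subtracting $\log_2|\tilde{\bA}|$ from both sides of \eqref{lemma_logA} therefore reduces the claim (up to the base-conversion factor $\log_2 e$) to the single matrix inequality $\ln|\bB|\leq{\rm tr}(\bB)-m$.

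The second step is a spectral reduction. Since $\bB$ is Hermitian positive definite, it is unitarily diagonalizable with positive real eigenvalues $\mu_1,\ldots,\mu_m$, so $\ln|\bB|=\sum_i\ln\mu_i$ and ${\rm tr}(\bB)-m=\sum_i(\mu_i-1)$. The desired inequality then reduces termwise to the elementary scalar bound $\ln\mu_i\leq\mu_i-1$ valid for every $\mu_i>0$, with equality iff $\mu_i=1$ (this is the tangent-line bound for the concave function $\ln(\cdot)$ at $1$, or equivalently follows by one-variable calculus). Summing over $i$ and reversing the congruence yields the lemma, with equality exactly at $\bA=\tilde{\bA}$.

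The proof carries no real obstacle; the only items to address carefully are the implicit positive-definite assumption on $\bA$ and $\tilde{\bA}$, which I would state explicitly before the proof, and the choice of logarithm base (the bound is most naturally stated with $\ln$; the $\log_2$ version differs by the factor $\log_2 e$ on the right-hand side, but this constant does not affect the use of the lemma as an MM surrogate since only the gradient at the current iterate determines the update direction for $\bQ$).
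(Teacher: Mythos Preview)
Your proof is correct and more detailed than the paper's, which consists of a single line: ``\eqref{lemma_logA} holds since $\log_2|\bA|$ is concave in $\bA$.'' The paper simply invokes the first-order upper bound for a concave function at $\tilde{\bA}$ without spelling out any of the linear-algebraic reduction; you instead give an elementary self-contained argument via the congruence $\bB=\tilde{\bA}^{-1/2}\bA\tilde{\bA}^{-1/2}$ and the scalar bound $\ln\mu\leq\mu-1$. Both arrive at the same inequality, but your route avoids appealing to matrix-calculus gradients and makes the equality case $\bA=\tilde{\bA}$ transparent, which is exactly what the MM construction needs.

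You are also right to flag two items the paper glosses over. First, the positive-definiteness assumption is indeed implicit (and necessary); the paper applies the lemma only to matrices of the form $\bI-\bP(\bI+\bP^{\rm H}\bP)^{-1}\bP^{\rm H}$, which are Hermitian positive definite, so the usage is sound even if the hypothesis is not written down. Second, your observation about the logarithm base is on point: the exact first-order bound for $\log_2|\cdot|$ carries a factor $\log_2 e$ in front of the trace term, and without it the displayed inequality can fail for $\bA$ near $\tilde{\bA}$ (e.g.\ $m=1$, $\tilde{\bA}=1$, $\bA=1+\varepsilon$). As you note, this constant is immaterial for the MM surrogate in Proposition~3---it only rescales the linear part and leaves the update direction for $\bQ$ unchanged---so the algorithmic conclusions of the paper are unaffected, but the lemma as stated is literally correct only with the $\log_2 e$ factor (or, equivalently, with natural logarithms).
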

\eqref{lemma_logA} holds since  $\log_2|\bA|$ is concave in $\bA$.  Hence, let $\bQ_B=\bI-\bP(\bI+\bP^{\rm H}\bP)^{-1}\bP^{\rm H}$ and consider $\tilde{\bQ}$ is a feasible point satisfying the unit modulus constraint, then $g(\bQ)$ can be lower bounded by
\bal
\notag
g(\bQ)&\geq -\log_2|\tilde{\bQ}_B|-{\rm tr}(\tilde{\bQ}_B^{-1}(\bQ_B-\tilde{\bQ}_B))\\
\notag
&=C_1(\tilde{\bQ})+h_B(\bQ)
\eal
where  $C_1(\tilde{\bQ})=-\log_2|\tilde{\bQ}_B|+{\rm tr}(\bI)-{\rm tr}(\tilde{\bQ}_B^{-1})$, $h_B(\bQ)={\rm tr}(\tilde{\bQ}_B^{-1}\bP(\bI+\bP^{\rm H}\bP)^{-1}\bP^{\rm H})$, $\tilde{\bQ}_B=\bI-\tilde{\bP}(\bI+\tilde{\bP}^{\rm H}\tilde{\bP})^{-1}\tilde{\bP}^{\rm H}$, $\tilde{\bP}=\bH_{IB}\tilde{\bQ}\bL^{\frac{1}{2}}$.
It can be verified that $C_1(\tilde{\bQ})+h_B(\bQ)$ is a surrogate function of $g(\bQ)$ so that P12 is approximated to the following ${\rm P13}$.
\bal
{\rm P13}: \underset{\bQ}{\max}\ C_1(\tilde{\bQ})+h_B(\bQ),\ s.t., |q_i|=1.
\eal
However, it is still difficult to apply  MM algorithm to solve this problem due to the complicate structure of $h_B(\bQ)$ as well as non-convex UMC. Hence, we apply a second approximation of $g(\bQ)$ by finding a lower bound of $h_B(\bQ)$. The following key lemma of matrix fractional functions is need to construct this bound \cite{Boyd-04}.

\begin{lemma}
For any positive semi-definite matrix $\bA\in\mathbb{C}^{m \times m}$ and positive definite matrix $\bB, \tilde{\bB}\in\mathbb{C}^{n \times n}$, and $\bX, \tilde{\bX}\in\mathbb{C}^{m \times n}$, 
\bal
\notag
&{\rm tr}(\bA\bX\bB^{-1}\bX^{\rm H})\\
\notag
\geq &{\rm tr}(\bA\tilde{\bX}\tilde{\bB}^{-1}\tilde{\bX}^{\rm H})-{\rm tr}(\bA\tilde{\bX}\tilde{\bB}^{-1}(\bB-\tilde{\bB})\tilde{\bB}^{-1}\tilde{\bX}^{\rm H})\\
\notag
&+{\rm tr}(\bA(\bX-\tilde{\bX})\tilde{\bB}^{-1}\tilde{\bX}^{\rm H})+{\rm tr}(\bA\tilde{\bX}\tilde{\bB}^{-1}(\bX-\tilde{\bX})^{\rm H}).
\eal
\end{lemma}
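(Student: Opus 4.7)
\subsection*{Proposal for the proof of Lemma 3}

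The plan is to recognize this inequality as the standard first-order (tangent-plane) lower bound for the matrix fractional function
\begin{equation*}
f(\bX,\bB)=\mathrm{tr}\bigl(\bA\bX\bB^{-1}\bX^{\rm H}\bigr)
\end{equation*}
on the convex domain $\{(\bX,\bB):\bB\succ\bo\}$, and reduce the problem to proving joint convexity of $f$ in $(\bX,\bB)$. Once convexity is in hand, the inequality follows immediately because the RHS is the linearization of $f$ at $(\tilde{\bX},\tilde{\bB})$.

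First I would verify that the RHS is indeed the first-order Taylor expansion of $f$ at $(\tilde{\bX},\tilde{\bB})$. Using the matrix differentials $d(\bX\bB^{-1}\bX^{\rm H})=d\bX\,\bB^{-1}\bX^{\rm H}+\bX\bB^{-1}d\bX^{\rm H}-\bX\bB^{-1}d\bB\,\bB^{-1}\bX^{\rm H}$ and the linearity of the trace, the Fr\'echet derivative of $f$ at $(\tilde{\bX},\tilde{\bB})$ evaluated at the direction $(\bX-\tilde{\bX},\bB-\tilde{\bB})$ is exactly
\begin{equation*}
\mathrm{tr}\!\bigl(\bA(\bX{-}\tilde{\bX})\tilde{\bB}^{-1}\tilde{\bX}^{\rm H}\bigr)+\mathrm{tr}\!\bigl(\bA\tilde{\bX}\tilde{\bB}^{-1}(\bX{-}\tilde{\bX})^{\rm H}\bigr)-\mathrm{tr}\!\bigl(\bA\tilde{\bX}\tilde{\bB}^{-1}(\bB{-}\tilde{\bB})\tilde{\bB}^{-1}\tilde{\bX}^{\rm H}\bigr),
\end{equation*}
so that the RHS of the lemma is precisely $f(\tilde{\bX},\tilde{\bB})$ plus this linear term. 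The inequality therefore reduces to the statement $f(\bX,\bB)\ge f(\tilde{\bX},\tilde{\bB})+\langle \nabla f(\tilde{\bX},\tilde{\bB}),(\bX-\tilde{\bX},\bB-\tilde{\bB})\rangle$, which is the standard tangent-plane characterization of convexity.

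Next I would prove joint convexity of $f$ in $(\bX,\bB)$. The cleanest route is a Schur-complement argument: for $\bB\succ\bo$, one has the equivalence
\begin{equation*}
\bY\succeq \bX\bB^{-1}\bX^{\rm H}\quad\Longleftrightarrow\quad \begin{bmatrix}\bY & \bX\\ \bX^{\rm H} & \bB\end{bmatrix}\succeq\bo.
\end{equation*}
The right-hand side is a linear matrix inequality in $(\bY,\bX,\bB)$, hence its feasible set is convex. This shows that the operator-valued map $(\bX,\bB)\mapsto \bX\bB^{-1}\bX^{\rm H}$ is matrix-convex in the L\"owner order on $\{\bB\succ\bo\}$, i.e.\ for any $\theta\in[0,1]$,
\begin{equation*}
(\theta\bX_1+(1-\theta)\bX_2)(\theta\bB_1+(1-\theta)\bB_2)^{-1}(\theta\bX_1+(1-\theta)\bX_2)^{\rm H}\preceq \theta\bX_1\bB_1^{-1}\bX_1^{\rm H}+(1-\theta)\bX_2\bB_2^{-1}\bX_2^{\rm H}.
\end{equation*}
Multiplying both sides by $\bA\succeq\bo$ on the left and right by $\bA^{1/2}$, and taking trace (which is a positive linear functional, so it preserves L\"owner inequalities), yields joint scalar convexity of $f$.

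Finally, I would invoke the tangent-plane property of convex differentiable functions to conclude the inequality stated in the lemma. The main obstacle I expect is ensuring the Schur-complement step and the matrix-convexity conclusion are handled cleanly in the complex Hermitian setting (in particular, the edge case where $\bA$ is only PSD rather than PD, which is handled by an approximation $\bA+\varepsilon\bI\succ\bo$ followed by $\varepsilon\downarrow 0$, using continuity of both sides in $\bA$). An alternative that avoids the Schur-complement machinery is a direct second-derivative computation: parametrizing along the line $t\mapsto f(\tilde{\bX}+t(\bX-\tilde{\bX}),\tilde{\bB}+t(\bB-\tilde{\bB}))$ and showing the Hessian quadratic form is a sum of squares of the type $\mathrm{tr}(\bA\bM\bB_t^{-1}\bM^{\rm H})$ with $\bM=(\bX-\tilde{\bX})-\tilde{\bX}\bB_t^{-1}(\bB-\tilde{\bB})$, which is manifestly nonnegative since $\bA\succeq\bo$ and $\bB_t\succ\bo$; integrating twice recovers the claimed lower bound.
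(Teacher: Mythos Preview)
Your proposal is correct and follows essentially the same approach as the paper: both arguments establish joint convexity of $(\bX,\bB)\mapsto\mathrm{tr}(\bA\bX\bB^{-1}\bX^{\rm H})$ on $\{\bB\succ\bo\}$ and then invoke the first-order tangent-plane characterization of convexity to obtain the stated lower bound. The only cosmetic difference is in how convexity is justified: the paper writes $\mathrm{tr}(\bA\bX\bB^{-1}\bX^{\rm H})$ as a sum of $m$ matrix fractional functions (via the rows of $\bA^{1/2}\bX$) and cites Boyd--Vandenberghe, whereas you prove operator convexity of $(\bX,\bB)\mapsto\bX\bB^{-1}\bX^{\rm H}$ directly by a Schur-complement/LMI argument and then apply $\mathrm{tr}(\bA^{1/2}(\cdot)\bA^{1/2})$; both routes are standard and equivalent.
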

Lemma 3 holds since ${\rm tr}(\bA\bX\bB^{-1}\bX^{\rm H})$ is equivalent to the sum of m matrix fractional functions, and thus convex \cite{Boyd-04}. Therefore, by applying this lemma to the term $h_B(\bQ)$  via setting $\bA=\tilde{\bQ}_B^{-1}$, $\bX=\bP$, $\tilde{\bX}=\tilde{\bP}=\bH_{IB}\tilde{\bQ}\bL^{\frac{1}{2}}$, $\bB=\bI+\bP^{\rm H}\bP$ and $\tilde{\bB}=\bI+\tilde{\bP}^{\rm H}\tilde{\bP}$ and after some manipulations, the lower bound of $g(\bQ)$ can be further expressed as
\bal
\label{second}
g(\bQ)\geq C_1(\tilde{\bQ})+h_B(\bQ)\geq \sum_{i=1}^{2}C_i(\tilde{\bQ})+g_B(\bQ)
\eal
where 
$C_2(\tilde{\bQ})=-{\rm tr}(\tilde{\bQ}_B^{-1})+{\rm tr}(\tilde{\bQ}_B^{-1}\bJ_B\tilde{\bP}^{\rm H}\tilde{\bP}\bJ_B^{\rm H})$, $g_B(\bQ)=-{\rm tr}(\tilde{\bQ}_B^{-1}\bJ_B\bP^{\rm H}\bP\bJ_B^{\rm H})+{\rm tr}(\tilde{\bQ}_B^{-1}\bJ_B\bP^{\rm H})+{\rm tr}(\bP\bJ_B^{\rm H}\tilde{\bQ}_B^{-1})$ 
and where $\bJ_B=\tilde{\bP}(\bI+\tilde{\bP}^{\rm H}\tilde{\bP})^{-1}$. 
In the following, we express $g_B(\bQ)$ to a more tractable form. Let  
$\bA_1=\tilde{\bQ}_B^{-1}\bJ_B\bL^{\frac{1}{2}}, \bA_2=\bH_{IB}^{\rm H}\bH_{IB}, \bA_3=\bL^{\frac{1}{2}}\bJ_B^{\rm H}, \bA_4=\bH_{IB}^{\rm H}\tilde{\bQ}_B^{-1}\bJ_B\bL^{\frac{1}{2}}$,
by applying the lemma of matrix identity in \cite{Zhang-17} that for any matrix $\bA$, $\bB$ and diagonal matrix $\bV$ with proper sizes, ${\rm tr}(\bV^{\rm H}\bA\bV\bB)=\bv^{\rm H}(\bA\odot\bB^{\rm T})\bv$ holds where the entries in $\bv$ are all diagonal elements in $\bV$, $g_B(\bQ)$ can be further expressed as
\bal
\notag
g_B(\bQ)&=-{\rm tr}(\bQ^{\rm H}\bA_2\bQ\bA_3\bA_1)+{\rm tr}(\bA_4\bQ^{\rm H})+{\rm tr}(\bQ\bA_4^{\rm H})\\
\label{second2}
&=-g_b(\bq)+2{\rm Re}\{\bq^{\rm H}\ba_4\}
\eal
where $g_b(\bq)=\bq^{\rm H}\bZ\bq$, $\bZ=\bA_2\odot(\bA_3\bA_1)^{\rm T}$ 
and where the entries in $\ba_4$ are all diagonal entires in $\bA_4$.
Hence, P13 can be further approximated to ${\rm P14}$.
\bal
\notag
{\rm P14}: &\underset{\bq}{\max}\ -g_b(\bq)+2{\rm Re}\{\bq^{\rm H}\ba_4\}+\sum_{j=1}^{2}C_j(\tilde{\bQ}),
\\ 
\notag
&s.t.\  |q_i|=1, \forall i.
\eal
We note that the objective function in P14 is quadratic concave in $\bq$, however, it is still difficult to solve this problem due to non-convex UMC. Hence, we apply the following lemma, from which the proof can be found in \cite{Sun-17}.
\begin{lemma}
Let $\bX$ be an $n \times n$ Hermitian matrix, then for any point $\tilde{\ba}\in\mathbb{C}^{n\times 1}$, $\ba^{\rm H}\bX\ba$ is upper bounded by
$\ba^{\rm H}\bX\ba\leq\ba^{\rm H}\bY\ba-2Re\{\ba^{\rm H}(\bY-\bX)\tilde{\ba}\}+\tilde{\ba}^{\rm H}(\bY-\bX)\tilde{\ba}$, 
where $\bY=\lambda_1(\bX)\bI$.
\end{lemma}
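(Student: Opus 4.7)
The plan is to recognize Lemma 4 as the standard ``second-order quadratic majorizer'' that underlies MM methods for quadratic forms. The key algebraic observation is that the inequality claimed is equivalent to a non-negativity statement for a Hermitian quadratic form in the difference $\ba-\tilde{\ba}$, so the whole proof reduces to showing one matrix is positive semi-definite.

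First I would move all terms to one side of the claimed inequality and rewrite it as
\begin{equation}
\notag
\ba^{\rm H}(\bY-\bX)\ba - 2\,{\rm Re}\{\ba^{\rm H}(\bY-\bX)\tilde{\ba}\} + \tilde{\ba}^{\rm H}(\bY-\bX)\tilde{\ba} \;\geq\; 0.
\end{equation}
Since $\bY-\bX$ is Hermitian (both $\bY=\gl_1(\bX)\bI$ and $\bX$ are Hermitian), the left-hand side is exactly $(\ba-\tilde{\ba})^{\rm H}(\bY-\bX)(\ba-\tilde{\ba})$ after recognising $2\,{\rm Re}\{\ba^{\rm H}\bM\tilde{\ba}\}=\ba^{\rm H}\bM\tilde{\ba}+\tilde{\ba}^{\rm H}\bM\ba$ for Hermitian $\bM=\bY-\bX$. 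Hence the lemma is equivalent to
\begin{equation}
\notag
(\ba-\tilde{\ba})^{\rm H}(\bY-\bX)(\ba-\tilde{\ba}) \;\geq\; 0,\qquad \forall\, \ba,\tilde{\ba}\in\mathbb{C}^{n\times 1},
\end{equation}
which is nothing but the statement that $\bY-\bX\succeq \bo$.

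Next I would verify $\bY-\bX\succeq\bo$ by a one-line spectral argument: since $\bX$ is Hermitian, it admits a unitary diagonalization $\bX=\bU\,{\rm diag}(\gl_1(\bX),\ldots,\gl_n(\bX))\,\bU^{\rm H}$ with real eigenvalues ordered so that $\gl_1(\bX)$ is the largest. Then
\begin{equation}
\notag
\bY-\bX \;=\; \gl_1(\bX)\bI-\bX \;=\; \bU\,{\rm diag}\bigl(\gl_1(\bX)-\gl_i(\bX)\bigr)_{i=1}^{n}\bU^{\rm H},
\end{equation}
whose eigenvalues $\gl_1(\bX)-\gl_i(\bX)\ge 0$ are all non-negative, so $\bY-\bX\succeq\bo$. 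This gives the desired quadratic non-negativity, and unwinding the equivalence above yields the lemma.

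There is essentially no serious obstacle: the only care needed is the routine expansion of $(\ba-\tilde{\ba})^{\rm H}(\bY-\bX)(\ba-\tilde{\ba})$ using Hermitian symmetry of $\bY-\bX$ to identify the cross-term with $2\,{\rm Re}\{\cdot\}$. I would mention in passing that equality holds exactly when $\ba=\tilde{\ba}$ (or more generally when $\ba-\tilde{\ba}$ lies in the eigenspace of $\bX$ associated with $\gl_1(\bX)$), which is why this bound is a valid MM majorizer tight at the current iterate $\tilde{\ba}$.
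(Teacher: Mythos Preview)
Your proof is correct and is the standard argument: the inequality is equivalent to $(\ba-\tilde{\ba})^{\rm H}(\bY-\bX)(\ba-\tilde{\ba})\geq 0$, which follows from $\bY-\bX=\gl_1(\bX)\bI-\bX\succeq\bo$. The paper does not actually prove this lemma in the text; it simply refers the reader to \cite{Sun-17} for the proof, so your self-contained derivation supplies exactly the argument that the cited reference contains.
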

Using this lemma, given a feasible point $\tilde{\bq}$, a  surrogate function of $g_b(\bq)$ can be expressed as 
\bal
\notag
g_b(\bq) &\leq \bq^{\rm H}\gl_1(\bZ)\bI\bq-2{\rm Re}\{\bq^{\rm H}(\gl_1(\bZ)\bI-\bZ)\tilde{\bq}\}\\
\notag
&\quad+\tilde{\bq}^{\rm H}(\gl_1(\bZ)\bI-\bZ)\tilde{\bq}\\
\label{third}
&=2n\gl_1(\bZ)-2{\rm Re}\{\bq^{\rm H}(\gl_1(\bZ)\bI-\bZ)\tilde{\bq}\}-\tilde{\bq}^{\rm H}\bZ\tilde{\bq}
\eal
where $\tilde{\bq}$ is the feasible point, the entries of which are the diagonal entries of $\tilde{\bQ}$. Hence, combing \eqref{second}, \eqref{second2} and \eqref{third}, $g(\bQ)\geq \tilde{g}(\bQ,\tilde{\bQ})$ follows.
It can be also verified that $\tilde{g}(\bQ,\tilde{\bQ})$ is a surrogate function of $g(\bQ)$.
By dropping the constant term in $\tilde{g}(\bQ,\tilde{\bQ})$, ${\rm P14}$ is finally approximated to 
$\max_{\bq} {\rm Re}\{\bq^{\rm H}\bv\},  s.t.\ |\bq_i|=1$, 
where $\bv=(\gl_1(\bZ)\bI-\bZ)\tilde{\bq}+\ba_4$. Obviously, the objective function ${\rm Re}\{\bq^{\rm H}\bv\}$ is maximized only when the phase of $\bq$ and $\bv$ are equal. Thus, the closed-form global optimal solution for P15 is expressed as \eqref{optimal_q} 
where $v_i$ denotes the $i$-th elements of $\bv$, from which the proof is complete.

%=======================================================================
%\vspace*{-0.5\baselineskip}

\end{document}